        \setlist[enumerate]{font=\normalfont}
  \def\alpha{alpha}%
  \def\({}%
  \def\){}%
  \def\texttt#1{<#1>}%
        \theoremstyle{plain}
        \newtheorem*{theorem*}{Theorem}
        \newtheorem{theorem}{Theorem}[section]
        \newtheorem{corollary}[theorem]{Corollary}
        \newtheorem{lemma}[theorem]{Lemma}
        \newtheorem{proposition}[theorem]{Proposition}
        \theoremstyle{definition}
        \newtheorem{definition}[theorem]{Definition}
        \newtheorem{example}[theorem]{Example}
        \newtheorem*{example*}{Example}
        \theoremstyle{remark}
        \newtheorem{remark}[theorem]{Remark}
        \newtheorem*{remark*}{Remark}
\newcommand{\PP}{\mathbb{P}}
\newcommand{\F}{\mathbb{F}}
\newcommand{\FF}{{\mathbb F}}
\DeclareMathOperator{\calC}{\mathcal{C}}
\author[T. Bogart]{Tristram Bogart}
\address{
Tristram Bogart\\
Departamento de matem\'aticas\\
Universidad de los Andes\\
Carrera $1^{\rm ra}\#18A-12$\\ 
Bogot\'a, Colombia
}
\email{tc.bogart22@uniandes.edu.co}
\author[A.-L. Horlemann]{Anna-Lena Horlemann-Trautmann}
\address{Anna-Lena Horlemann-Trautmann, Faculty of Mathematics and Statistics, University of St.\ Gallen, Switzerland}
\email{anna-lena.horlemann@unisg.ch}
\author[D. Karpuk]{David Karpuk}
\address{David Karpuk, Artificial Intelligence Centre of Excellence, F-Secure Corporation, Helsinki, Finland}  
\email{david.karpuk@f-secure.com}
\author[A. Neri]{Alessandro Neri}
\address{Alessandro Neri, Institute for Communications Engineering,
Technical University of Munich, Germany}
\email{alessandro.neri@tum.de}
\author[M. Velasco]{Mauricio Velasco}
\address{
Mauricio Velasco\\
Departamento de matem\'aticas\\
Universidad de los Andes\\
Carrera $1^{\rm ra}\#18A-12$\\ 
Bogot\'a, Colombia
}
\email{mvelasco@uniandes.edu.co}
\subjclass[2020]{Primary 94B27 
Secondary 14H45, 05D40, 11T71} 
\keywords{Algebraic geometric codes, reducible curves, locally repairable codes, PMDS codes}
\title[PMDS codes from reducible algebraic curves]{Constructing Partial MDS codes from reducible algebraic curves}
\begin{document}
\maketitle

\begin{abstract} We propose reducible algebraic curves as a mechanism to construct Partial MDS (PMDS) codes geometrically. We obtain new general existence results, new explicit constructions and improved estimates on the smallest field sizes over which such codes can exist. Our results are obtained by combining ideas from projective algebraic geometry, combinatorics and probability theory.\end{abstract}

\section{Introduction} \label{sec:intro}
Currently, the amount of data stored in a single data center can run into hundreds of petabytes and thus necessarily needs to be stored in several different servers. A central problem for such large amounts of data is that of {\it reliable distributed storage}, meaning that the data should be recoverable whenever there is a simultaneous failure of some percentage of the storage servers. A natural solution to this problem is to introduce redundancy by encoding the data via an error-correcting code. The length of this code determines the storage space used while its minimum distance controls the number of simultaneous erasures that can be recovered. However, the recovery of any one erasure may require knowing the values of most other components imposing an excessive communication cost among servers. The theory of \emph{locally recoverable codes} gives us a way to reduce the communication cost by making the recovery local (see e.g. \cites{papailiopoulos2014locally, tamo2014family}).

More precisely, we think of a file as a vector $x\in \F_q^k$ which we encode and store over several storage nodes (servers) via a codeword $c(x)\in \calC$ where $\calC\subseteq \F_q^n$ is a (linear) code of dimension $k$. For simplicity we assume that each of our storage nodes stores exactly one coordinate of $c(x)$. In case of (multiple) node failure, we want to be able to recover the lost information as quickly and efficiently as possible.
In this regard the \emph{locality} of a code plays an important role: it denotes the number of nodes one has to contact for repairing a lost node. We call the set of nodes one has to contact if a given node fails, the locality group of that node and call the collection of locality groups a \emph{locality configuration}.

Furthermore the code $\calC$ is called a \emph{partial maximum distance separable (PMDS)} code if its distinct locality groups are disjoint and it is \emph{maximally recoverable} in the sense that any erasure pattern that is information theoretically correctable is effectively correctable with such a code. PMDS codes are thus objects of great practical importance.

More concretely, a locally repairable code $\calC$ is PMDS with global parameter $s>0$ if its locality configuration $I_1,\dots, I_m$ is a partition of the components and satisfies:
\begin{enumerate}
    \item The restrictions of all codewords to the components indexed by locality set $I_i$ are an MDS code (with length $n_i$ and dimension $k_i$) for $i=1,\dots, m$. 
    \item Any word can be recovered uniquely after erasure of a set $J$ of its components of size $s+\sum_{i} (n_i-k_i)$, when $J$ consists of
    \begin{enumerate}
        \item any $n_i-k_i$ components of the locality set $I_i$ for $i=1,\dots, m$ and
        \item an additional set of any $s$ components.
    \end{enumerate}
    \item $\calC$ has dimension $k=\sum_i k_i -s$.
\end{enumerate}

It is known that PMDS codes exist for any locality configuration if the field size is large enough \cites{ch07,ht19}. Furthermore, some explicit constructions of PMDS codes are known, e.g. \cites{bl13,bl14,bl16,ca17,ch15,gabrys2018constructions, go14,  martinez2019universal}.
However, our knowledge of the structure of PMDS codes remains far from being complete.

Algebraic geometric (AG) codes are a basic source~\cites{goppa1977codes, Pellikaan, MR1186841} of linear codes with interesting structures. Given a (typically irreducible and nonsingular) variety $X$ over $\FF_q$ and sections of a line bundle $\mathcal{L}$ on $X$, such codes are built by evaluating the sections on a given finite subset $\Gamma\subseteq X(\FF_q)$.

In this article we extend algebraic geometric codes to {\it reducible curves} $C$ over $\FF_q$ and use them to define evaluation codes endowed with locality sets defined by the irreducible components of $C$.  We then ask for conditions under which such constructions yield PMDS codes. Our approaches combine techniques from different areas of mathematics, using geometric, combinatorial and probabilistic methods.  The main results in this article are:

\begin{enumerate}
    \item ({\it Geometry of AG PMDS codes}) In Theorem~\ref{pmds_theorem} we give a characterization of the PMDS property for algebraic geometric codes in the language of classical projective geometry. 
    \item({\it Explicit constructions of AG PMDS codes with global parameter $s\in \{1,2\}$}) We use our geometric interpretation of the PMDS property to give simple explicit constructions of PMDS codes with global parameter $s=1$ for all localities. Moreover, in Theorem \ref{thm:PMDSconstruction_s=2} we provide a new construction for $s=2$ and $k_i=2$, which improves the smallest field size obtained by previous explicit constructions. 
    \item ({\it Existence of AG PMDS codes}) In Theorem~\ref{thm: admissible} we prove that there exist geometric PMDS codes for all locality configurations and all global parameters for all sufficiently large field sizes. This method is nonconstructive but leads to an explicit bound on the field size. 
    \item ({\it Randomized construction of AG PMDS codes}) In Sections~\ref{sec:linecodes} and~\ref{sec:probabilistic} we address the lack of explicitness in the previous result. More specifically we specialize our curve $C$ to be an arrangement of lines and analyze the probability that evaluation at a suitably randomized set of points on $C(\FF_q)$ leads to a geometric PMDS code. Our main result is Theorem~\ref{thm:randomcodes} which guarantees that such codes exist whenever $q=O(n^{s})$. Crucially the probabilistic approach is nearly constructive in the sense that for $\epsilon>0$ of our choosing it provides us with a probability distribution on $C(\FF_q)$ for which independent sampling leads to a PMDS code with the desired parameters {\it with probability at least $1-\epsilon$}, 
    \item ({\it Improved probabilistic estimates of field size}) Finally, in Theorem~\ref{thm:alterationcodes} we use the probabilistic method with alterations to obtain estimates for the sizes of fields over which PMDS codes (with localities $(2,\dots,2)$) must exist, improving them to $q=O(n^{s-1})$, a bound which compares favorably with most PMDS existence results in the literature (see Remark~\ref{rmk: comparison}).
\end{enumerate}

The material in the article is organized as follows: Section~\ref{sec: prelims} contains some basic material on coding theory and algebraic geometry over finite fields. Section~\ref{sec:AG} contains the basic construction of codes from reducible curves and a characterization of PMDS codes in the language of projective geometry. Section~\ref{sec:explicit} contains an explicit construction of PMDS codes with global parameter $s\in \{1,2\}$. Section~\ref{sec:existence} proves the existence of algebraic geometric PMDS codes for all localities and all $s$ for sufficiently large fields. Section~\ref{sec:linecodes} focuses on codes constructed from unions of lines and describes the possible obstructions for such codes to be PMDS. Section~\ref{sec:probabilistic} summarizes the key ideas  of the probabilistic method in combinatorics and applies it to prove the probabilistic results described in items $(4)$ and $(5)$ above.
 \medskip
 
{\bf Acknowledgements.} We wish to thank Juan Sebasti\'an Diaz for useful conversations during the completion of this work. A. Neri is funded by \emph{Swiss National Science Foundation}, through grant no. 187711.  M. Velasco is partially supported by research funds from Universidad de los Andes, Facultad de Ciencias, Proyecto  INV-2018-50-1392.

\section{Preliminaries}\label{sec:prelim}

We will use the notation $[n]:= \{1,2,\dots,n\}$ throughout the paper. For a prime power $q$ we let $\mathbb F_q$ denote the finite field with $q$ elements. 

\subsection{Coding theory preliminaries}\label{sec: prelims}
We begin with a brief introduction to the theory of error-correcting codes in the Hamming metric. For a more detailed treatment the reder should refer to~\cites{van_lint}. 

Let $n,k$ be positive integers. By an $[n,k]$ code we mean a linear subspace $\calC \subseteq \FF_q^n$ of dimension $k$. If such a code has minimum Hamming distance $d$ we will call it an $[n,k,d]$ code. By the Singleton bound any $[n,k,d]$ code satisfies $k+d\leq n+1$. The codes achieving the equality are called maximum distance separable (MDS) codes. Such codes are capable of correcting $n-k$ erasures. 

Recall that the group of linear isometries of $\F_q^n$ (i.e., of linear maps that preserve the Hamming distance) consists of componentwise scalings and permutations. More precisely, this group corresponds to $(\F_q^*)^n\rtimes \mathcal S_n$, which acts on $\F_q^n \ni (u_1,\ldots, u_n)$ as
$$(v,\sigma)\cdot (u_1,\ldots, u_n)=(v_1u_{\sigma^{-1}(1)},\ldots, v_nu_{\sigma^{-1}(n)}).$$
Most properties of interest in coding theory are invariant under this group and it is therefore reasonable to introduce the following equivalence relation.

\begin{definition}
We say that two $[n,k]$ codes $\calC$ and $\calC^\prime$ are \emph{equivalent} if there exists a linear isometry of $\FF_q^n$ which maps one onto the other.
\end{definition}

\subsection{PMDS codes}
For $I\subseteq [n]$ we let $\calC_I\subseteq \FF_q^{|I|}$ be the image of the projection of $\calC$ on the coordinates labeled by the indices in $I$.

\begin{definition}\label{pmds_definition}
Let $\calC\subseteq \mathbb F_q^n$ be a linear code of dimension $k$ and let $k_1,\dots, k_m, s$ be positive integers. We say that $\calC$ has \emph{block-locality $(k_1,\dots,k_m)$ with global parameter $s$} if we can write 
$[n]$ as a disjoint union 
\[
[n] = I_1\sqcup \cdots \sqcup I_m
\]
of subsets $I_j$ of cardinalities $n_j$ such that:
\begin{enumerate}
    \item each projection $\calC_i:=\calC_{I_i}$ is a $[n_i,k_i]$ MDS code, and
    \item for any set $J\subset [n]$ such that $|J| = s +  \sum_{i} (n_i-k_i)$ and $|J\cap I_i| \geq n_i-k_i$, the projection map $\calC \rightarrow \calC_{[n]\setminus J}$ is an isomorphism.
\end{enumerate}
We will say that $\calC$ is a \emph{PMDS code}, if $k = \sum_{i} k_i - s$.  For such a code, we call any set $J$ of coordinates as above a \emph{maximal correctable erasure pattern}.
\end{definition}
Equivalently, a code is PMDS if we can correct any $n_i-k_i$ erasures locally in the code $\calC_i$, and any additional $s$ global erasures.  One often wishes to restrict to the homogeneous case wherein $n_i = n/m$ and $k_i = \ell$ for all $i$ and some $\ell$, but we do not necessarily make this assumption. Note that the PMDS property is invariant under code equivalence since it is obviously invariant under coordinate scalings and permutations.

For the PMDS definition to be sensible we need $k\geq \max_i \{k_i\}$. Note that if equality is achieved, with $k=k_i=\ell$ for all $i$, then we recover MDS codes as a special case (see \cite{ht17}*{Proposition 5}). We will therefore assume that $m>1$ and in the homogeneous case that $k>\ell$ throughout.

\subsection{Preliminaries on algebraic geometry}
If $X$ is a variety over $\FF_q$ we let $X(\FF_q)$ be the set of $\F_q$-rational points of $X$. We denote the $k$-dimensional affine (resp. projective) space over $\FF_q$ by $\mathbb{A}^k$ (resp. by $\PP^k$). For a subset $S\subseteq \PP^k$ we denote by $\langle S\rangle$ the projective subspace of $\PP^{k-1}$ spanned by $S$ (i.e., the subvariety defined by the set of linear forms vanishing on $S$). A set of points $T\subseteq \PP^{k-1}$ is in linearly general position if any subset of size $s$ of $T$ spans a projective space of dimension $\min(s-1,k-1)$. A variety $X\subseteq \PP^{k-1}$ is nondegenerate if it is not contained in any hyperplane.

By a rational normal curve $Z$ in $\PP^{k-1}$ we mean a curve projectively equivalent to the image of the $k$-th Veronese morphism $\nu_k: \PP^1\rightarrow \PP^{ k-1}$
given in homogeneous coordinates by all monomials of degree $k-1$, \[\nu_k\left([s:t]\right) = \left[s^{k-1}: s^{k-2}t:\dots: st^{k-2}: t^{k-1}\right].\] 
 By computing a Vandermonde determinant it is easy to see that any set of distinct points lying on a rational normal curve is in linearly general position in $\PP^{k-1}$. Over an algebraically closed field this property characterizes rational normal curves among all irreducible and non-degenerate curves in $\PP^{k-1}$. Rational normal curves can also be characterized as the only non-degenerate irreducible curves of degree $k-1$ in $\PP^{k-1}$~\cite{Harris}*{Proposition 19.9}.

\section{Codes from Reducible Projective Curves}\label{sec:AG}

In this section we give a procedure for constructing evaluation codes endowed with a locality structure from reducible projective curves. To begin, we construct codes from sets of points in projective space.

\subsection{Constructing codes from points in projective space}

\begin{definition}\label{basic_construction}
Let $\Gamma=\{P_1,\ldots,P_n\}\subseteq \PP^{k-1}(\FF_q)$ be a finite set of points.
For each point $P_i$ fix an affine representative $\hat{P_i}\in \mathbb{A}^{k}(\FF_q)$  and let $e$ be the evaluation map
\[
e:\mathcal{L}\rightarrow \F_q^n,\quad e(f) = \left(
f(\hat{P_1}),\ldots,f(\hat{P_n})
\right)\in \F_q^n
\]
where $\mathcal{L} := \left\{a_0X_0 + \cdots +a_{k-1}X_{k-1}\ |\ a_i\in \F_q\right\}.$\\
We define the \emph{algebraic geometric (AG) code} $\calC$ determined by $\Gamma$ to be
\[
\calC = \left\{
e(f)\in \F_q^n\ |\ f\in\mathcal{L}
\right\}.
\]
\end{definition}

\begin{remark} A point $P$ in $\PP^{k-1}(\FF_q)$  has several distinct affine representatives $\hat{P}\in \mathbb{A}^{k}(\FF_q)$ and therefore the previous construction leads to many different possible codes $\mathcal{C}$. We claim that all these choices lead to equivalent codes. This is because if $\hat{P}_j$ and $\hat{P}'_j$ are distinct affine representatives for the point $P_j$ then there exist nonzero scalars $\lambda_j$ such that $\hat{P}'_j=\lambda_j\hat{P}_j$ for every $j=1,\dots, n$. As a result, for any $f\in \mathcal{L}$ the equality $f(\hat{P}'_j)=\lambda_j f(\hat{P}_j)$ holds and therefore the code obtained from evaluation at the representatives $\hat{P}_j'$ is the result of scaling the code obtained from evaluation at the representatives $\hat{P}_j$ by independently scaling the components with the vector $(\lambda_0,\dots, \lambda_n)$. 

Furthermore, an automorphism of $\mathbb P^{k-1}$ acts on $\mathcal{L}$ by a permutation of its elements, leading to a permutation of the words in the code, and hence keeps the code unchanged.
\end{remark}

The previous remark implies that, up to code equivalence, the code $\calC$ defined above is completely determined by the set of points $\Gamma\subseteq \PP^{k-1}$ up to projective automorphisms. It follows that any code property should be interpretable in the language of projective geometry (see~\cite{MR1186841}*{Theorem 1.1.6} for a proof that this is in fact an equivalence). The main result of this section is Theorem~\ref{pmds_theorem} below,  which recasts the PMDS property in the language of projective geometry. 

As a first application of this philosophy, we begin by giving a geometric interpretation to code projections.  
For any set $S\subseteq \Gamma$ we denote by $\calC_S$ the image of the projection of $\calC$ onto the coordinates indexed by the points of $S$. More precisely $\calC_S$ is the image of the composition $\pi_S\circ e: \mathcal{L}\rightarrow \FF_q^{|S|}$ where $\pi_S: \FF_q^n\rightarrow \FF_q^{|S|}$ is the projection onto the coordinates indexed by the points of $S$. The following simple lemma gives a geometric description of the codes $\calC_S$.

\begin{lemma}\label{lem: basic_pt_codes} The composition $\pi_S\circ e$ induces an isomorphism between the space of linear forms in $\langle S\rangle$ and the code $\calC_S$. In particular:
\begin{enumerate}
\item $\langle S\rangle\cong \PP^{t-1}$ if and only if $\calC_S$ has dimension $t$, and
\item $S$ spans the ambient space $\PP^{k-1}$ if and only if $\pi_S:\calC\rightarrow \calC_S$ is an isomorphism.
\end{enumerate}
\end{lemma}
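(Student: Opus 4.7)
The plan is to compute the kernel of the composition $\pi_S\circ e: \mathcal{L}\to \F_q^{|S|}$ directly and then read off both assertions from a dimension count. Since a linear form $f\in\mathcal{L}$ satisfies $f(\lambda\hat{P})=\lambda f(\hat{P})$, the condition $f(\hat{P}_i)=0$ is independent of the chosen affine representative and defines a well-defined vanishing condition on the projective point $P_i$. This yields
\[
\ker(\pi_S\circ e) \;=\; \{f\in\mathcal{L} : f(P_i)=0\ \text{for all}\ P_i\in S\} \;=\; I(\langle S\rangle)_1,
\]
where the last equality is the standard fact that a linear form vanishing on a set of points also vanishes on their projective linear span.

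Next, I would invoke the classical duality between linear subvarieties of $\PP^{k-1}$ and the degree-one parts of their ideals: if $\langle S\rangle$ has projective dimension $t-1$, then $I(\langle S\rangle)_1$ is a vector subspace of $\mathcal{L}$ of dimension $k-t$. Combining this with the first isomorphism theorem gives
\[
\calC_S \;=\; \pi_S(e(\mathcal{L})) \;\cong\; \mathcal{L}/I(\langle S\rangle)_1,
\]
and $\mathcal{L}/I(\langle S\rangle)_1$ is canonically the space of linear forms on $\langle S\rangle$. This is exactly the isomorphism claimed in the lemma, and taking dimensions yields $\dim\calC_S = k-(k-t) = t$, proving part (1).

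For part (2) I would apply (1) twice: once to $S$ and once to $\Gamma$. Since $\pi_S: \calC\to\calC_S$ is always surjective, it is an isomorphism if and only if $\dim\calC=\dim\calC_S$, equivalently $\dim\langle\Gamma\rangle = \dim\langle S\rangle$. Because $S\subseteq\Gamma$ forces $\langle S\rangle\subseteq\langle\Gamma\rangle$, equality of dimensions is equivalent to $\langle S\rangle = \langle\Gamma\rangle$, and under the standing convention that $\Gamma$ spans the ambient $\PP^{k-1}$ (the only case in which the forward direction of (2) is nonvacuous) this is exactly the statement that $S$ spans $\PP^{k-1}$. I do not anticipate any real obstacle in carrying this out; the only subtlety worth flagging is the well-definedness of ``$f$ vanishes on $P_i$'' independently of the affine representative, a point already made in the remark preceding the lemma.
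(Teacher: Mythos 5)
Your proposal is correct and follows essentially the same route as the paper: identify $\ker(\pi_S\circ e)$ with the linear forms vanishing on $\langle S\rangle$, apply the first isomorphism theorem to get $\calC_S\cong\mathcal{L}/\ker(\pi_S\circ e)$, and read off the dimensions. Your handling of part (2) via comparing $\dim\langle S\rangle$ with $\dim\langle\Gamma\rangle$ is a harmless repackaging of the paper's ``kernel is trivial'' argument, and your explicit flag that one needs $\Gamma$ to span $\PP^{k-1}$ for the reverse implication is a point the paper leaves implicit.
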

\begin{proof} As in Definition~\ref{basic_construction} let $\mathcal{L}$ be the space of linear forms in $\PP^{k-1}$, let $e: \mathcal{L}\rightarrow \FF_q^{n}$ be the evaluation at the chosen affine representatives of the points of $\Gamma$ and let $\pi_S: \FF_q^n\rightarrow \FF_q^{|S|}$ be the projection. The code $\calC_S$ is by definition the image of $\pi_S\circ e$. The kernel $\mathcal{K}$ of this map consists of the linear forms vanishing identically at all points of $S$ and therefore $\pi_S\circ e$ defines an isomorphism between $\mathcal{L}/\mathcal{K}$ and $\calC_S$. The quotient $\mathcal{L}/\mathcal{K}$ is canonically isomorphic to the space of linear forms on $\langle S\rangle$ proving the initial claim and part (1). 
For part $(2)$ note that $\langle S\rangle=\PP^{k-1}$ if and only if the kernel $\mathcal{K}$ is trivial, and therefore this is equivalent to $\pi_S\circ e$ being an isomorphism between $\mathcal{L}$ and $\calC_S$ which factors through $\calC$.
\end{proof}

\begin{example} (Reed-Solomon codes) Let $Z\subseteq \PP^{k-1}$ be the rational normal curve over $\FF_q$. 
The AG code  $\calC$ defined by $\Gamma:=Z(\FF_q)$ is a (extended) Reed-Solomon code. This code has length $|\Gamma|=q+1$. Moreover, since every set of points of $Z$ is in linearly general position, Lemma~\ref{lem: basic_pt_codes} proves that whenever $q+1\geq k$ the code $\calC$ has dimension $k$ and that the projection of $\calC$ onto any set of $k$ coordinates is an isomorphism, proving that $\calC$ is an MDS code.  
\end{example}

\subsection{Constructing codes from reducible curves}

To construct AG codes with desirable properties, we often specify the sets $\Gamma$ as subsets of other varieties. The following construction, which uses reducible curves, will be our source for constructing PMDS codes.

\begin{definition}\label{main_construction}
Let $C_1,\ldots,C_m\subseteq \PP^{k-1}$ be distinct irreducible curves over $\FF_q$. Define $C := C_1\cup\cdots\cup C_m$
and let $\Gamma=\{P_1,\ldots,P_n\}\subseteq C(\F_q)$ be a given set of points, each lying in at most one of the $C_i$, and define a partition $\Gamma:=\Gamma_1\sqcup \cdots \sqcup \Gamma_m$ into $m$ disjoint subsets of cardinalities $n_i$, determined by the irreducible components $C_i$ of $C$ via $\Gamma_i:=\Gamma\cap C_i(\FF_q)$. 
Let $\calC$ be the AG code defined by $\Gamma\subseteq \PP^{k-1}$ as in Definition~\ref{basic_construction}. 
For $i = 1,\ldots,m$, let $\calC_i:=\calC_{\Gamma_i}$ denote the image of the projection of $\calC$ onto the coordinates in $[n]$ corresponding to the points of $\Gamma_i$. We will refer to the $\calC_i$ as the \emph{local codes} of $\calC$. 
\end{definition}

Assume moreover that we are given positive integers $k_1,\dots, k_m,k$ and $s$ which satisfy the inequalities $n_i\geq k_i$ for every $i$ and the equality $s+k=\sum_{i} k_i$.

\begin{definition}\label{proj_pmds_definition} We call a set $S\subseteq \Gamma\subseteq \PP^{k-1}$ an {\it evaluation set} if  $|S\cap \Gamma_i|\leq k_i$ for $i=1,\dots, m$. We say that $\Gamma $ is {\it admissible} if the following two conditions hold: 
\begin{enumerate}
\item The projective subspace $\langle\Gamma_i\rangle$ has dimension $k_i-1$ and the points of $\Gamma_i$ are in linearly general position in  $\langle\Gamma_i\rangle$. 
\item Every evaluation set $S\subseteq \Gamma$ of size $k=\sum_{i} k_i -s$  spans $\PP^{k-1}$. 
\end{enumerate}
\end{definition}

\begin{theorem}\label{pmds_theorem}
The following statements are equivalent:
\begin{enumerate}
\item The code $\calC$ from Definition~\ref{main_construction} is a PMDS code with blocks given by the $\Gamma_i$, block locality $(k_1,\dots, k_m)$ and global parameter $s$. 
\item $\Gamma$ is an admissible set with respect to the partition $\Gamma_1,\dots, \Gamma_m$.
\end{enumerate}
 \end{theorem}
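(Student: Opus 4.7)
The strategy is to use Lemma~\ref{lem: basic_pt_codes} as a dictionary converting projective-geometric properties of the point set $\Gamma$ into coding-theoretic statements about $\calC$. Recall from Definition~\ref{pmds_definition} that $\calC$ is PMDS with the specified parameters if and only if three conditions hold: (i) each local code $\calC_i$ is $[n_i,k_i]$-MDS; (ii) for every erasure pattern $J \subseteq [n]$ with $|J| = s + \sum_i(n_i-k_i)$ and $|J \cap I_i| \geq n_i - k_i$, the projection $\calC \to \calC_{[n]\setminus J}$ is an isomorphism; and (iii) $\dim\calC = \sum_i k_i - s$. I will match each clause against admissibility.

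First I would handle the equivalence between condition (a) of admissibility and the MDS property of each $\calC_i$. Viewing $\Gamma_i$ as a set of points in the ambient space $\langle \Gamma_i\rangle$, Lemma~\ref{lem: basic_pt_codes} identifies $\calC_i$ with the space of linear forms on $\langle \Gamma_i\rangle$, so $\dim \calC_i = k_i$ iff $\langle \Gamma_i\rangle \cong \PP^{k_i-1}$. Under this identification, $\calC_i$ is MDS iff every projection onto $k_i$ coordinates is an isomorphism; by part (2) of the lemma applied inside $\langle \Gamma_i\rangle$, this means every $k_i$-subset of $\Gamma_i$ spans $\langle \Gamma_i\rangle$, which is exactly the linear-general-position requirement of admissibility~(a).

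Next I would match condition (b) of admissibility against the erasure-correction condition (ii). A short cardinality computation shows that the map $J \mapsto S := [n]\setminus J$ is a bijection between maximal correctable erasure patterns and evaluation sets $S \subseteq \Gamma$ of size $k = \sum_i k_i - s$: the constraint $|J \cap I_i| \geq n_i - k_i$ becomes $|S \cap \Gamma_i| \leq k_i$, and $|J| = s + \sum_i(n_i-k_i)$ becomes $|S| = k$. By Lemma~\ref{lem: basic_pt_codes}(2), $\pi_S \co \calC \to \calC_S$ is an isomorphism iff $S$ spans $\PP^{k-1}$, so (ii) for all maximal $J$ is exactly admissibility~(b). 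Finally, for (iii): if $\Gamma$ is admissible then any single evaluation set $S$ of size $k$ already spans $\PP^{k-1}$, so $\Gamma \supseteq S$ spans $\PP^{k-1}$ and Lemma~\ref{lem: basic_pt_codes}(1) gives $\dim \calC = k$; in the other direction the dimension equality is part of being PMDS.

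No step presents a serious conceptual obstacle, but the main care point is the combinatorial bookkeeping in the second paragraph: one must confirm both that the size and block-intersection constraints on $J$ translate bijectively into the evaluation-set conditions on $S$, and that Lemma~\ref{lem: basic_pt_codes} is being applied to the correct ambient space in each case (the full $\PP^{k-1}$ for the global erasure condition, versus the smaller span $\langle \Gamma_i\rangle$ for the local MDS property). Once these translations are set up, both implications of the equivalence follow at once.
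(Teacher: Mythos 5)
Your proof is correct and follows essentially the same route as the paper's: Lemma~\ref{lem: basic_pt_codes} is used as the dictionary, condition (1) of admissibility is matched to the local MDS property, and the complementation $J\mapsto S=[n]\setminus J$ translates maximal erasure patterns into evaluation sets of size $k$. Your explicit verification of the dimension condition $\dim\calC=k$ is a small point the paper leaves implicit, but it does not change the argument.
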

\begin{proof} We will prove the claim by showing that property $(1)$ (resp. $(2)$) in Definition~\ref{pmds_definition} and property $(1)$ (resp. $(2)$) of Definition~\ref{proj_pmds_definition} are  equivalent. By construction the code $\calC_{\Gamma_i}$ has length $n_i$. By Lemma~\ref{lem: basic_pt_codes} the code $\calC_{\Gamma_i}$  has dimension $k_i$ if and only if the space $\langle \Gamma_i\rangle$ is isomorphic to $\PP^{k_i-1}$. Furthermore $\calC_{\Gamma_i}$ is a $[n_i,k_i]$ MDS code if and only if the projection $\pi_T: \calC_{\Gamma_i}\rightarrow \calC_T$ is an isomorphism for every set $T\subseteq \Gamma_i$ with $|T|=k_i$. By Lemma~\ref{lem: basic_pt_codes} this condition is equivalent to the fact that the points of $\Gamma_i$ are in linearly general position in $\langle \Gamma_i\rangle$. Finally let $J\subseteq [n]$ be a set with $|J\cap \Gamma_i|\geq n_i-k_i$ and $|J|=s+\sum_i (n_i-k_i)$ and let $S$ be the complement of $J$. It follows that $S$ is admissible for $\Gamma$ and has cardinality $k$.
By Lemma~\ref{lem: basic_pt_codes} $S$ spans $\PP^{k-1}$ if and only if the projection $\pi_S:\calC\rightarrow \calC_S$ is an isomorphism. 
\end{proof}

The following two examples illustrate our construction.
 
\begin{example}[Explicit construction of a simple PMDS AG code]\label{simple_pmds_example}
 We will construct an AG code using $m = 2$ components.  Let $C_1$ and $C_2$ be the $x$- and $y$-axes, respectively, in the projective plane $\PP^2$:
  \[
  \begin{aligned}
  C_1 &:= \left\langle
  [1:0:1], [0:0:1]
  \right\rangle \\
  C_2 &:= \left\langle
  [0:1:1], [0:0:1]
  \right\rangle
  \end{aligned}
  \]
  and let $C = C_1\cup C_2$.  Define
  \[
  \Gamma = \left(C_1(\F_q)\cup C_2(\F_q)\right)\setminus\{[0:0:1]\}
  \]
  which contains $n = 2q$ evaluation points. The resulting AG code $\mathcal{C}$ clearly has $m = 2$ components, each giving an MDS local code $\mathcal{C}_i$ with parameters $[n_i,k_i] = [q,2]$.  This last statement is immediate because any set of distinct points of cardinality at least two on a line spans it and is in linearly general position. Moreover, the whole code $\mathcal{C}$ has dimension $k = 3$, since the set $\Gamma$ clearly spans all of $\PP^2$.
  
  Lastly, we claim that $\mathcal{C}$ is a PMDS code with global parameter $s = k_1 + k_2 - k=2+2-3=1$.  Indeed, the complement $S$ of any maximal erasure pattern $J$ as in Definition \ref{pmds_definition} corresponds to three points on $C$, with each of the lines $C_i$ containing at most two of these points.  Such a set will always span $\PP^2$, hence any such $S$ is an information set of $\mathcal{C}$ and maximal erasure patterns are always correctable.  This proves that $\mathcal{C}$ is a PMDS code.
\end{example}

\begin{example}[Non-example of a PMDS AG code]\label{fail_example}
Let us show by example that a na\"ive generalization of Example \ref{simple_pmds_example} to other parameters results in sets $S\subseteq\Gamma$ of evaluation points such that $|S\cap C_i| = k_i$ and $|S|=k$, but $\langle S\rangle \subsetneq \PP^{k-1}$.  We choose $m = 2$, $k_1 = k_2 = 3$, and $s = 2$.  Thus $k = 6-2 = 4$, and the reducible curve $C = C_1 \cup C_2$ is a union of two conics in $\PP^3$. Choose any subset $S = \{P_1,P_2,Q_1,Q_2\}\subseteq\Gamma$ such that $P_i\in C_1$ and $Q_i\in C_2$.  For the resulting AG code $\mathcal{C}$ to be PMDS, it is necessary that $S$ spans all of $\PP^3$.  Suppose that $P_1, P_2, Q_1$ are not collinear (a necessary condition for $\mathcal{C}$ to be PMDS) and therefore span a hyperplane $H$.  The intersection of $H$ and $C_2$ consists of two points, one of which is $Q_1$ and the other of which is some other point $\widetilde{Q}_2$ also defined over $\F_q$.  If $Q_2 = \widetilde{Q}_2$ then $S$ only spans a hyperplane, and the code cannot correct the corresponding erasure pattern.
\end{example}

This last example shows that a more careful choice of $\Gamma$ is necessary to guarantee the PMDS property.  More precisely, our constructions in the following sections will choose the evaluation points so that no such co-hyperplanar critical evaluation sets exist as subsets of $\Gamma$.  For example, for all subsets $\{P_1,P_2,Q_1\}\subseteq \Gamma$ as in Example \ref{fail_example}, we must choose $\Gamma$ so that (i) this triple is not collinear, and (ii) $\widetilde{Q}_2\not\in\Gamma$.  A much sparser subset of the $\F_q$-points of $C$ is necessary to achieve this goal.

\begin{remark}
It is common in coding theory to construct evaluation codes from algebraic curves in a more abstract setting: We are given a curve $C$ (typically irreducible and non-singular), a vector space $V$ of sections of a line bundle  $\mathcal{L}$ on $C$ and a finite set $\Gamma\subseteq C(\FF_q)$. By embedding $C$ in projective space via the morphism $\phi$ specified by the sections in $V$ (resolving indeterminacies if necessary) we can think of $C$ or rather of its image $\phi(C)$ as a curve in $\PP(V^*)$ and apply the construction above with the set $\phi(\Gamma)$. The concrete projective approach above is therefore more general than the abstract approach.
\end{remark}

\section{Explicit Constructions of Algebraic Geometric PMDS codes with Global Parameter One and Two} \label{sec:explicit} 

In this section we focus on explicit constructions of geometric PMDS codes for $s=1,2$. We begin with $s=1$ for any choice of localities $(k_1,\dots, k_m)$. We hope its simplicity will convince the reader of the usefulness of the projective viewpoint when constructing PMDS codes. Given $(k_1,\dots, k_m)$ we let $k:=\sum_i k_i -1$ and construct a reducible curve $C\subseteq \PP^{k-1}$ which is a disjoint union of $m$ rational normal curves of degrees $k_i-1$. Furthermore we will construct a finite set $\Gamma\subseteq C(\FF_q)$ and show that it is admissible concluding, via Theorem~\ref{pmds_theorem}, that the corresponding code $\calC$ is indeed PDMS. More precisely, let $R=\sum_i k_i$ and let $P_1,\dots, P_{R}$ be a set of points in linearly general position in $\PP^{R-2}(\FF_q)$. Split the points into $m$ disjoint subsets of sizes $k_1,\dots, k_m$ and let $\Lambda_j$ be the projective subspaces (of dimension $k_j-1$) spanned by the subsets. Note that for each $j\in\{1,\ldots,m\}$ the subspace $H_j=\langle \Lambda_i: i\neq j\rangle$ is a projective space of dimension $R-k_j-1$ that intersects $\Lambda_j$ in exactly one point, which we denote by $Q_j$. For each $j$ let $C_j$ be a rational normal curve over $\FF_q$ of degree $k_j-1$ in $\Lambda_j$ passing through $Q_j$ and through those $P_i's$ which are contained in $\Lambda_j$  (such a curve exists because there are $k_j+1$ such points and they are linearly independent). Finally let $\Gamma_j:=C_j(\FF_q)\setminus \{Q_j\}$, $\Gamma:=\Gamma_1\sqcup\cdots\sqcup\Gamma_m$ and $\calC$ be the AG code defined by $\Gamma$.

\begin{theorem}
  The set $\Gamma$ is admissible  and therefore the code $\calC$ is a PMDS code with localities $(k_1,\dots, k_m)$ and $s=1$.
\end{theorem}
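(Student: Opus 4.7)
By Theorem~\ref{pmds_theorem} it suffices to verify the two conditions of Definition~\ref{proj_pmds_definition} for $\Gamma$. For condition (1), I would note that by construction $\Gamma_j$ contains the $k_j$ original points of the $j$-th block, which are in linearly general position in $\PP^{R-2}$, hence span $\Lambda_j$. Since $\Gamma_j \subseteq C_j \subseteq \Lambda_j$, we get $\langle \Gamma_j \rangle = \Lambda_j$ of dimension $k_j - 1$. Moreover, since $\Gamma_j$ consists of distinct points on the rational normal curve $C_j \subseteq \Lambda_j$, the Vandermonde argument recalled in Section~\ref{sec:prelim} shows that they are in linearly general position.

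For condition (2), let $S \subseteq \Gamma$ be an evaluation set of size $k = R - 1$. Writing $s_i = |S \cap \Gamma_i|$, we have $\sum_i s_i = \sum_i k_i - 1$ with $s_i \leq k_i$, so there is a unique index $j$ with $s_j = k_j - 1$ and $s_i = k_i$ for all $i \neq j$. For each $i \neq j$, the $k_i$ points of $S \cap \Gamma_i$ lie on the rational normal curve $C_i \subseteq \Lambda_i$ and hence span $\Lambda_i$. Consequently
\[
\langle S \rangle \;\supseteq\; \langle \Lambda_i : i \neq j \rangle \;=\; H_j,
\]
which is a subspace of dimension $R - k_j - 1$.

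Next, let $W := \langle S \cap \Gamma_j \rangle$, which has dimension at most $k_j - 2$. The key observation is that $(S \cap \Gamma_j) \cup \{Q_j\}$ consists of $k_j$ distinct points on the rational normal curve $C_j \subseteq \Lambda_j \cong \PP^{k_j-1}$, hence these points are in linearly general position and span $\Lambda_j$. In particular $Q_j \notin W$, and $W$ itself has dimension exactly $k_j - 2$. Since $W \subseteq \Lambda_j$ and $\Lambda_j \cap H_j = \{Q_j\}$ by construction, we conclude $W \cap H_j = \emptyset$ in projective space.

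Finally, a dimension count gives
\[
\dim \langle H_j, W \rangle \;=\; \dim H_j + \dim W + 1 \;=\; (R - k_j - 1) + (k_j - 2) + 1 \;=\; R - 2 \;=\; k - 1,
\]
so $\langle S \rangle \supseteq \langle H_j, W \rangle = \PP^{k-1}$, as required. The main subtlety, and the reason the construction works, is the last observation: the $k_j - 1$ points of $S\cap \Gamma_j$ together with the predetermined intersection point $Q_j$ form a full set of $k_j$ points on $C_j$, and the rational normal curve property forces $Q_j$ out of their span, preventing the collapse illustrated in Example~\ref{fail_example}.
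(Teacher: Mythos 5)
Your proof is correct and follows essentially the same route as the paper's: you identify the unique block $j$ with $|S\cap\Gamma_j|=k_j-1$, note that $\langle S\rangle$ contains $H_j$ and the span $W$ of $S\cap\Gamma_j$, and use the fact that $(S\cap\Gamma_j)\cup\{Q_j\}$ consists of $k_j$ distinct points on the rational normal curve $C_j$ (hence in linearly general position) to force $Q_j\notin W$, so $W\cap H_j=\emptyset$ and the dimension count closes the argument. The paper phrases this last step as a proof by contradiction, but the content is identical.
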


\begin{proof} 
  Clearly, the points in every $\Gamma_i$ are in general position in $\langle \Gamma_i \rangle $, since $\Gamma_i$ is a subset of a rational normal curve which spans $\Lambda_i$. We only need to show that every evaluation set  $S\subseteq \Gamma$ of size $R-1$ spans $\PP^{R-2}$. By definition of evaluation set, we have that there exists a $j\in \{1,\ldots, m\}$ such that $|S\cap \Gamma_j|=k_j-1$, while $|S\cap \Gamma_i|=k_i$ for every $i \neq j$. Therefore $\langle S\cap \Gamma_i\rangle=\Lambda_i$ for every $i\neq j$. Let $\{P_1',\ldots ,P_{r_j-1}'\}=S\cap \Gamma_j$, and call $\Lambda_j'=\langle S\cap \Gamma_j\rangle$, which has dimension $k_j-2$.
  Hence, we have $\langle S\rangle=\langle H_j, \Lambda_j'\rangle$. In order to prove that $\langle S\rangle=\PP^{R-2}$ it is equivalent to show that $H_j\cap \Lambda_j'=\emptyset$, by a dimension argument. Suppose by contradiction that $H_j\cap \Lambda_j'\neq \emptyset$. Since $\Lambda_j'\subset \Lambda_j$ and by construction $H_j\cap\Lambda_j=\{Q_j\}$, then this implies that $Q_j \in \Lambda_j'$. Therefore, $P_1',\ldots,P_{r_j-1}, Q_j$ lie on the same $k_i-2$-dimensional subspace. However, this is not possible by construction, since the points are all distinct and lie on the rational normal curve $C_j$. This concludes the proof. 
\end{proof}

\begin{remark}
This construction slightly generalizes the one given in \cite{ch15}, where the authors used extended Reed-Solomon codes as local codes. Their construction indeed coincides with ours when $k_1=\ldots=k_m$. 
Moreover, it is a special case of the characterization given in \cite{ht17}, where it was shown that every PMDS code with  global parameter $s=1$ is constructed using MDS codes as local codes.
\end{remark}

We conclude this section by providing an explicit construction of algebraic geometric PMDS codes with global parameter $s=2$ and localities $k_i=2$. This is the first known explicit construction for such codes over finite fields of cardinality $q\geq n-1$. We remark that, in the homogeneous case where also  $n_i=n/m$, the construction in \cite{bl14} requires a field size $q>n$.

We will restrict our attention to codes constructed from evaluation sets $\Gamma \subseteq C(\FF_q)$ where $C$ is a reducible curve in $\PP^{k-1}$ with the property that its components are $m$ distinct generic {\it lines}.  More precisely, we assume that $m$ is a given positive integer, that $k_i=2$ for $i=1,\dots, m$ and that $s=2$, so the equality $k=\sum_i k_i - s$ becomes $k=2m-2$. Assume $q\geq 2m-1$, let $Z$ be the rational normal curve in $\PP^{2m-3}$ and choose $2m$ distinct points $P_1,Q_1,P_2,Q_2,\dots, P_m,Q_m$ in $Z(\FF_q)$. Define the lines  $L_i:=\langle P_i,Q_i\rangle$ for $i=1,\dots, m$ and let $C=\bigcup_{i} L_i$. We will show that an appropriate choice of $\Gamma\subseteq C(\FF_q)$ is admissible and conclude, via Theorem~\ref{pmds_theorem}, that the corresponding code $\calC$ is PDMS with localities $(2,\dots, 2)$ and $s=2$.

The key point of the construction is understanding the structure of the obstructions to admissibility, which we can do completely explicitly for $s=2$. To this end, for every $i\neq j$, define the maps  $f_{i,j}:L_i\rightarrow L_j$ by 
\begin{equation}\label{eq:fij}
f_{i,j}(P) := \langle L_1,\ldots,\hat{L}_i,\ldots,\hat{L}_j,\ldots,L_m,P\rangle\cap L_j, \end{equation}
where the notation $\hat{L}_i$ means that  $L_i$ is not taken in the spanning set.  
Furthermore, define $f_{i,i}:=\mathrm{id}$ for every $i \in \{1,\ldots,m\}$. 

Note that $f_{i,j}$ is well-defined, since  any $m-2$ lines and an additional single point on one of the remaining lines always span a hyperplane, which intersects the last line in exactly one point.

\begin{lemma} With the notation above, the following hold.
\begin{enumerate}
\item $f_{j,\ell}\circ f_{i, j}=f_{i, \ell}$ for every $i,j,\ell\in \{1,\ldots, m\}$. 
\item $f_{i,j}$ is bijective for every $i,j \in \{1,\ldots, m\}$.
\end{enumerate}
\end{lemma}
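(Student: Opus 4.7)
The plan is first to reinterpret each $f_{i,j}$ in terms of the pencil of hyperplanes through the span of the ``unused'' lines, and then to reduce part (1) to a classical statement about three pairwise skew lines in $\PP^3$. Write $W_{ij}:=\langle L_k : k\neq i,j\rangle$ and $W_{ij\ell}:=\langle L_k : k\notin\{i,j,\ell\}\rangle$ for the spans that appear in the definition of $f_{i,j}$ and its iterates. Because the $2m$ points $P_1,Q_1,\ldots,P_m,Q_m$ lie on a rational normal curve in $\PP^{2m-3}$ they are in linearly general position; from this I would extract the technical input used throughout, namely that $\dim W_{ij}=2m-5$, that $\dim W_{ij\ell}=2m-7$, and that $L_k$ is disjoint from $W_S$ whenever $k\notin S$.

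For part (2), I would factor $f_{i,j}$ as $L_i\to \mathcal{P}_{ij}\to L_j$, where $\mathcal{P}_{ij}$ is the pencil of hyperplanes containing $W_{ij}$, the first arrow sends $P$ to $W_{ij}\vee\{P\}$, and the second sends $H$ to $H\cap L_j$. Since $W_{ij}$ is disjoint from both $L_i$ and $L_j$ and has codimension $2$, both arrows are bijections, and hence $f_{i,j}$ is bijective with $f_{j,i}$ as two-sided inverse. A convenient byproduct, which I will exploit in part (1), is the reformulation: $f_{i,j}(P)=Q$ if and only if the line $\overline{PQ}$ meets $W_{ij}$.

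For part (1), the cases in which $\{i,j,\ell\}$ has fewer than three elements reduce at once to $f_{i,i}=\mathrm{id}$ and to $f_{j,i}=f_{i,j}^{-1}$. For pairwise distinct $i,j,\ell$ I would project from $W_{ij\ell}$ to obtain a map $\pi\colon \PP^{2m-3}\dashrightarrow \PP^3$. The general position data guarantees that $L_i,L_j,L_\ell$ are disjoint from the center, project isomorphically onto three pairwise skew lines $\bar L_i,\bar L_j,\bar L_\ell\subset\PP^3$, and that the spans $W_{ij}=W_{ij\ell}\vee L_\ell$, $W_{j\ell}=W_{ij\ell}\vee L_i$, and $W_{i\ell}=W_{ij\ell}\vee L_j$ project respectively to the ``third'' lines $\bar L_\ell$, $\bar L_i$, and $\bar L_j$. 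The reformulation from paragraph two therefore reads, after projection: $\bar f_{a,b}(\bar P)=\bar Q$ iff the line $\overline{\bar P\bar Q}$ meets the remaining projected line.

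With this in place the argument collapses to a single transversal line. Given $P\in L_i$ and $Q=f_{i,j}(P)$, the line $\overline{\bar P\bar Q}$ meets $\bar L_\ell$ at a single point $\bar R$, and this one line simultaneously witnesses both $\bar f_{j,\ell}(\bar Q)=\bar R$ (it also meets $\bar L_i$ at $\bar P$) and $\bar f_{i,\ell}(\bar P)=\bar R$ (it also meets $\bar L_j$ at $\bar Q$). Lifting back through the isomorphism $L_\ell\to \bar L_\ell$ yields $f_{j,\ell}\circ f_{i,j}=f_{i,\ell}$. I expect the main obstacle to be the careful bookkeeping of the general position dimensions underlying both the pencil description and the reduction to $\PP^3$; once those are verified, the actual content of the argument is a single transversal line in $\PP^3$.
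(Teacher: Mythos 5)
Your proposal is correct, and it reaches the conclusion by a genuinely different route from the paper. The paper argues directly in $\PP^{2m-3}$: for pairwise distinct $i,j,\ell$ it sets $X=\langle L_r : r\notin\{i,j,\ell\}\rangle$, intersects the two hyperplanes $\langle X,L_\ell,P\rangle$ and $\langle X,L_i,Q\rangle$, identifies that intersection with $\langle X,P,Q\rangle$ by a dimension count resting on the general position of the $P_i,Q_i$ (the key step being that $P\in\langle X,Q\rangle$ would force $L_i\cap\langle X,L_j\rangle\neq\emptyset$), and then observes that $R$ lies in $\langle X,L_j,P\rangle$; bijectivity is deduced afterwards as the special case $i=\ell$ of the composition law. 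You instead first prove bijectivity independently via the pencil of hyperplanes through $W_{ij}$, extract the symmetric reformulation ``$f_{i,j}(P)=Q$ iff $\overline{PQ}$ meets $W_{ij}$,'' and then project from $W_{ij\ell}$ to reduce the composition law to the classical statement that a single transversal to three pairwise skew lines in $\PP^3$ witnesses all three incidences at once. The two arguments rest on exactly the same general-position dimension counts (indeed your cone $\langle W_{ij\ell},\overline{PQ}\rangle$ is precisely the paper's $\Lambda=\langle X,P,Q\rangle$), but your packaging makes the composition law transparent and gives (2) without (1), at the cost of having to verify that the projection is well behaved on the three lines and that the transversal reformulation descends through it — bookkeeping you correctly flag and which does all check out (in particular $\overline{PQ}\cap W_{ij\ell}=\emptyset$ needs the same general-position fact the paper uses, and for $m=3$ the projection degenerates harmlessly to the identity on $\PP^3$).
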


\begin{proof}~\begin{enumerate} 
\item 
If $i=j$ or $j=\ell$, then the statement is trivial. If $i=\ell$ and $i\neq j$ then define the space $Y:=\langle L_r \mid r \notin \{i,j\} \rangle$. Fix a point $P \in L_i$ and call $Q:=f_{i,j}(P)$, that is $\langle Y,P\rangle \cap L_j=\{Q\}$. Hence $\langle Y,P\rangle =\langle Y, P, Q \rangle =\langle Y, Q\rangle$, since $Q \notin Y$. This implies that $P \in \langle Y,Q\rangle$, and thus $f_{j,i}(Q)=P$. 

Let now $i,j,\ell\in\{1,\ldots,m\}$ be pairwise distinct. Define the space $X:=\langle L_r \mid r \notin \{i,j,\ell\} \rangle$. Fix a point $P \in L_i$ and call $Q:=f_{i,j}(P)$ and $R:=f_{j,\ell}(Q)$. This means that $\langle X,L_{\ell},P\rangle \cap L_j=\{Q\}$ and 
$\langle X,L_{i},Q\rangle \cap L_\ell=\{R\}$. We want to show that $f_{i,\ell}(P)=R$. Consider now the two spaces $\langle X,L_{\ell},P\rangle, \langle X,L_{i},Q\rangle$ and let $\Lambda$ be their intersection. The two spaces are distinct hyperplanes and hence $\dim \Lambda=2m-5$. Moreover, it is easy to see that $P,Q,R \in \Lambda$. 
Consider now the space $\langle X, P, Q \rangle$. Clearly, $P, Q\notin X$, so it has dimension at least $2m-6$. Moreover, suppose that $P\in \langle X, Q\rangle$. Then we would have $L_i \cap \langle X,L_j \rangle \neq \emptyset$, which would produce a dependence among the points$\{P_1,Q_1\ldots,P_{m}, Q_m\}\setminus\{P_\ell,Q_\ell\}$. This contradicts the fact that the points $P_1, Q_1\ldots,P_{m},Q_m$ are in general positions. Therefore,
we can deduce that $\langle X, P, Q \rangle$ has dimension $2m-5$. Hence $\langle X,P,Q \rangle =\Lambda$. Take now the space $\langle X, L_j, P\rangle$. We have that $\Lambda \subseteq \langle X, L_j, P\rangle$, and we conclude that $R \in \langle X, L_j, P\rangle$ and thus $f_{i,\ell}(P)=R.$

\item  
Follows immediately from $(1)$ with $i=\ell$. 
\end{enumerate}
\end{proof}

 The maps $f_{i,j}$ allow us to define an equivalence relation on $C$ by saying 
 that two points $P\in L_i$, $Q\in L_j$ are equivalent if and only if $f_{i,j}(P)=Q$.
 By the previous lemma, the equivalence class of a point $P\in L_i(\FF_q)$ is given by the set $S_P:=\{f_{i,j}(P) \mid 1\leq j \leq m \}$. These sets give us a partition of $C(\FF_q)$ into $q+1$ disjoint sets of size $m$ and furthermore, for every $j$, the points in $L_j(\FF_q)$ are a system of distinct representatives for the equivalence relation (i.e., the points of $L_j(\FF_q)$ parametrize the equivalence classes). 
 
 Let $\Gamma\subseteq C(\FF_q)\subseteq \PP^{2m-2}$ be a set containing at least two points from each $L_i(\FF_q)$ and at most one element in each equivalence class, and let $\Gamma_j:=\Gamma\cap L_j(\FF_q)$. Note that $\Gamma$ exists because we are assuming $q+1\geq 2m$. We are now in a position to prove the main result of this section.
 
\begin{theorem}\label{thm:PMDSconstruction_s=2}
 Any set $\Gamma=\Gamma_1\sqcup \Gamma_2\sqcup \dots \sqcup \Gamma_m$ defined as above is admissible  and therefore the AG code $\calC$ defined by $\Gamma$ is a PMDS code with localities $(2,\dots, 2)$ and $s=2$. 
\end{theorem}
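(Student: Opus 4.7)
The plan is to apply Theorem~\ref{pmds_theorem} directly by verifying the two conditions of Definition~\ref{proj_pmds_definition}. Condition $(1)$ is essentially automatic: since $\Gamma_i\subseteq L_i$ contains at least two distinct points, $\langle\Gamma_i\rangle=L_i$, which has dimension $k_i-1=1$, and any collection of distinct points on a line is in linearly general position. So all the work goes into condition $(2)$: showing that every evaluation set $S$ of size $k=2m-2$ spans $\PP^{2m-3}$.

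Since $|S\cap \Gamma_i|\leq 2$ for each $i$ and $\sum_i |S\cap \Gamma_i|=2m-2$, there are exactly two combinatorial cases. In \textbf{Case A} one block $\Gamma_\ell$ contributes $0$ points while every other block contributes $2$. Then $\langle S\rangle$ contains all lines $L_i$ with $i\neq \ell$, hence contains the $2(m-1)=2m-2$ points $\{P_i,Q_i\}_{i\neq \ell}$ lying on the rational normal curve $Z\subseteq\PP^{2m-3}$. Since any $2m-2$ points on $Z$ are in linearly general position, these points span $\PP^{2m-3}$, and Case A is done.

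The substantive case is \textbf{Case B}, where two blocks $\Gamma_i$ and $\Gamma_j$ each contribute exactly one point, say $P\in \Gamma_i$ and $Q\in \Gamma_j$, and every other block contributes $2$. Set $X:=\langle L_\ell : \ell\neq i,j\rangle$; by the rational normal curve argument $\dim X = 2m-5$. Then $\langle S\rangle\supseteq \langle X,P,Q\rangle$. General position of the $P_i,Q_i$ on $Z$ forces $P\notin X$, so $H:=\langle X,P\rangle$ is a hyperplane of dimension $2m-4$, and $\langle S\rangle=\PP^{2m-3}$ if and only if $Q\notin H$. The key observation is that $H$ is exactly the hyperplane whose intersection with $L_j$ defines $f_{i,j}(P)$, so $H\cap L_j=\{f_{i,j}(P)\}$. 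Hence $Q\in H$ would force $Q=f_{i,j}(P)$, i.e.\ $P$ and $Q$ would belong to the same equivalence class. But $\Gamma$ was chosen to contain at most one representative per equivalence class, so $P$ and $Q$ lie in distinct classes, and therefore $Q\notin H$.

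The main obstacle I anticipate is not a hidden technicality but rather making sure that the dimension bookkeeping in Case B is airtight — in particular, justifying $\dim X = 2m-5$ and $P\notin X$ rigorously from the fact that $P_1,Q_1,\dots,P_m,Q_m$ lie on a rational normal curve in $\PP^{2m-3}$. Once the equivalence relation $f_{i,j}$ is recognized as parametrizing precisely the bad configurations, condition~$(2)$ follows at once, and then Theorem~\ref{pmds_theorem} delivers the PMDS property.
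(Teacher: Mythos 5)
Your proposal is correct and follows essentially the same route as the paper's proof: the same two-case split on how $S$ meets the blocks, the same span computation via the general position of $P_1,Q_1,\dots,P_m,Q_m$ on the rational normal curve, and the same identification of the only possible failure in the second case with the condition $Q=f_{i,j}(P)$, which is excluded by the choice of at most one representative per equivalence class. The dimension facts you flag ($\dim X=2m-5$ and $P\notin X$) follow exactly as you suggest, since any $2m-2$ of the points on $Z\subseteq\PP^{2m-3}$ are linearly independent.
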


\begin{proof}
Since we chose the  sets $\Gamma_i$ to have cardinality at least $2$, we have $\langle \Gamma_i \rangle=L_i$. Let $S\subseteq \Gamma$ be an evaluation set of size $k=2m-2$. There are two possibilities for the intersection of $S$ with the $\Gamma_i$'s:

\noindent Case I: There exists $i \in \{1,\ldots, m\}$ such that $S \cap \Gamma_i=\emptyset$ and $|S\cap \Gamma_j|=2$ for every $j \neq i$. Hence,
$$\langle S \rangle =\langle L_j \mid j\neq i \rangle=\langle \{P_j,Q_j \mid j \neq i\} \rangle=\PP^{2m-3},$$
where the last equality follows from the fact that the points $\{P_1,Q_1,\ldots,P_m,Q_m\}$ are in general position.

\noindent{Case II:} There exists two distinct integers $i,j\in\{1,\ldots,m\}$ such that $|S\cap \Gamma_i|=|S\cap \Gamma_j|=1$ and $|S\cap \Gamma_\ell|=2 $ for every $\ell \notin \{i,j\}$. Let $P$ be the point in $S\cap \Gamma_i$ and $Q$ be the point in $S \cap \Gamma_j$. Moreover, we define $X:=\langle L_\ell \mid \ell \notin \{i,j\}\rangle$, which by genericity of the points $P_i$'s has dimension $2m-5$ and is contained in $\langle S \rangle$. Moreover, again by genericity, none of the points $P,Q$ belongs to $X$. Hence $\langle S \rangle$ has dimension at least $2m-4$, and has dimension exactly $2m-4$ if and only if $Q \in \langle X,P\rangle$. By definition of the map $f_{i,j}$, this is true if and only if $f_{i,j}(P)=Q$, that is, $P$ and $Q$ belong to the same equivalence class. However, this is not possible, since we only selected at most one point in each equivalence class for constructing the set $\Gamma$. Thus, $\langle S\rangle =\PP^{2m-3}$.

Therefore, $\Gamma$ is admissible and $\calC$ is a PMDS code by Theorem~\ref{pmds_theorem}.
\end{proof}
 
\begin{example}
We illustrate the result of Theorem \ref{thm:PMDSconstruction_s=2} by constructing a $[20,6]$ PMDS code over $\FF_{19}$, with $m=4$, $n_i=5$ and $k_i=2$ for $i\in\{1,2,3,4\}$. We choose the following eight points in $\PP^5$:
$$\begin{array}{rlrl} &P_1=  [1:0:0:0:0:0] & &Q_1 = [0:1:0:0:0:0], \\ 
 &P_2= [0:0:1:0:0:0], & &Q_2 = [0:0:0:1:0:0], \\
 &P_3= [0:0:0:0:1:0], & &Q_3 =[0:0:0:0:0:1], \\ 
 &P_4=  [1:1:1:1:1:1], & &Q_4= [1:2:4:8:16:13]. \end{array}$$
 Then we define the lines $L_i=\langle P_{i},Q_{i}\rangle$, for $i\in\{1,2,3,4\}$.
 The first line $L_1$ is then given by
 $$ L_1=\{ R_x=[1:x:0:0:0:0] \mid x \in \F_{19} \}\cup \{Q_1\}.$$
 We can compute $f_{1,4}(Q_1)=Q_4-P_4$. Moreover, for the remaining points, define $\lambda_x:=\frac{x-1}{2-x}$ for any $x\neq 2$. We have
 $$f_{1,4}(R_x)=\begin{cases} Q_4 & \mbox{ if } x=2, \\ 
 P_4+\lambda_xQ_4 & \mbox{ if } x\neq 2. \end{cases}$$
 One can also compute $f_{1,3}$, obtaining $f_{1,3}(Q_1)=15P_3+12Q_3$ and 
  $$f_{1,3}(R_x)=\begin{cases} P_3+2Q_3 & \mbox{ if } x=2, \\ 
 P_3+\frac{(1+13\lambda_x)}{(1+16\lambda_x)}Q_3 & \mbox{ if } x\neq 2. \end{cases}$$
 Finally, we have $f_{1,2}(Q_1)=3P_2+7Q_2$ and 
   $$f_{1,2}(R_x)=\begin{cases} P_2+2Q_2 & \mbox{ if } x=2, \\ 
 P_2+\frac{(1+8\lambda_x)}{(1+4\lambda_x)}Q_2 & \mbox{ if } x\neq 2. \end{cases}$$
 This gives us the equivalence classes
 \begin{align*}
     S_{Q_1} &=\left\{Q_1, 3P_2+7Q_2, 15P_3+12Q_3, P_4-Q_4 \right\}, \\
     S_{R_x} &=\begin{cases}\left\{R_x,P_2+2Q_2,P_3+2Q_3, Q_4 \right\} & \mbox{ if } x=2, \\ 
     \left\{R_x,P_2+\frac{(1+8\lambda_x)}{(1+4\lambda_x)}Q_2,P_3+\frac{(1+13\lambda_x)}{(1+16\lambda_x)}Q_3,P_4+\lambda_xQ_4  \right\} & \mbox{ if } x\neq2. \\ \end{cases} 
 \end{align*}
 Now, we select a point from each set $S_{R_x}$. For instance, we do the following choice: for each set $S_{R_x}$, we take the point in $L_i\cap S_{R_x}$ if and only if $x \equiv i \mod 4$, and we then select the point $15P_3+12Q_3$ from $S_{Q_1}$ which belongs to $L_3$. This produces the $[20,6]$ code whose generator matrix is
 \small{$$\left( \begingroup 
\setlength\arraycolsep{2pt}\begin{array}{ccccc|ccccc|ccccc|ccccc} 
 1 & 1 & 1 &  1 &  1 &  0 & 0 & 0 &  0 &  0 &  0 &  0 &  0 &  0 &  0 & 11 & 12 & 16 &  5 & 10\\
 1 & 5 & 9 & 13 & 17 &  0 & 0 & 0 &  0 &  0 &  0 &  0 &  0 &  0 &  0 &  2 &  4 & 12 &  9 &  0\\
 0 & 0 & 0 &  0 &  0 &  1 & 1 & 1 &  1 &  1 &  0 &  0 &  0 &  0 &  0 &  3 &  7 &  4 & 17 & 18\\
 0 & 0 & 0 &  0 &  0 &  2 & 7 & 5 & 17 & 14 &  0 &  0 &  0 &  0 &  0 &  5 & 13 &  7 & 14 & 16\\
 0 & 0 & 0 &  0 &  0 &  0 & 0 & 0 &  0 &  0 &  1 &  1 &  1 &  1 &  1 &  9 &  6 & 13 &  8 & 12 \\
 0 & 0 & 0 &  0 &  0 &  0 & 0 & 0 &  0 &  0 & 10 & 15 & 12 & 18 & 16 & 17 & 11 &  6 & 15 &  4\\
 \end{array} \endgroup \right).$$}
 Observe that if we have fixed $m=4$ and $k_1=k_2=k_3=k_4=2$, then starting with the sets $S_{Q_1}$ and $S_{R_x}$ for $x \in \FF_{19}$ computed above, we can construct PMDS codes of any length $n\leq 20$, where the local codes are $[n_i,2]$ codes over $\F_{19}$.
\end{example}

\begin{remark}
It is tempting to try a similar approach for constructing PMDS codes with global parameter $s>2$ starting from a reducible curve composed of $m$ generic lines. However, the obstructions quickly become difficult to manage because one has to explicitly describe several distinct combinatorial types of obstructions, the number of types increasing with $s$. In Sections~\ref{sec:linecodes} and ~\ref{sec:probabilistic} we abandon this approach in favor of methods from probability theory aiming to quantify the relative sizes of such obstructions in order to obtain results for all $s>2$.
\end{remark}

\section{Existence of Algebraic Geometric PMDS Codes} \label{sec:existence}

Assume we are given positive integers $m,s$, $k_1,\dots, k_m$ and $k$ satisfying $k_i<k$ and $k+s = \sum_i k_i$. In this section we prove the existence of algebraic geometric PMDS codes with arbitrary localities $(k_1,\dots, k_m)$ and global parameter $s$ over $\FF_q$ {\it for all sufficiently large field sizes $q$}. To this end let $Z$ be the rational normal curve of degree $k-1$ in $\PP^{k-1}$. If $q+1 > \sum_i k_i$ then $Z(\FF_q)$ contains a set $\Gamma_0$ consisting of $\sum_i k_i$ distinct points in $Z(\FF_q)$. Split $\Gamma_0$ into $m$ disjoint subsets $\Gamma_0^{i}$ of size $k_i$, $i=1,\dots, m$. Since $\Gamma_0$ consists of distinct points in the rational normal curve $Z$, the points of $\Gamma_0$ are in linearly general position and in particular the projective subspace $W_i:=\langle \Gamma_0^i\rangle$ has dimension $k_i-1$. Let $C_i$ be a rational normal curve of degree $k_i-1$ in $W_i$ over $\FF_q$ containing the points of $\Gamma_0^{i}$ and let $C:=\bigcup_{i} C_i$. By construction, $C_i \cap C_j = \emptyset$ if $i \neq j$. 
The reducible curve $C$ will be our main tool for constructing AG PMDS codes as in Definition~\ref{main_construction}.

\begin{remark}
Note that the set $\Gamma_0$ consists of $\sum_i k_i$ points, including exactly $k_i$ in $C_i$. Since $\Gamma_0\subseteq Z\subseteq \PP^{k-1}$ the points of $\Gamma_0$ are in linearly general position. It follows that every subset $S\subseteq \Gamma_0$ is an evaluation set and that every subset of size $k$ of $\Gamma_0$ spans $\PP^{k-1}$. We conclude that $\Gamma_0$ is admissible for every $0\leq s < \sum_i k_i$.
\end{remark}

The number of erasures that the PMDS code of an admissible $\Gamma$ can recover is precisely $|\Gamma|+s-\sum_i k_i$ and therefore we would like $\Gamma$ to be both admissible and as large as possible. We will show that, whenever the field size $q$ is sufficiently large, it is possible to add a point to an admissible set and obtain a bigger, but still admissible set. 

We say that a component $C_i$ of $C$ is \emph{selected} by an evaluation set $S$ whenever $|S\cap C_i(\FF_q)|=k_i$. The key to the construction is the following Lemma

\begin{lemma}\label{lem:admissible_hyperplane} If $\Gamma$ is admissible then every evaluation set $S\subseteq \Gamma$ of size $k-1$ spans a hyperplane in $\PP^{k-1}$ which does not contain any component not selected by $S$. 
\end{lemma}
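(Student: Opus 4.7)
My plan is to derive both assertions by a single move: extend $S$ to an evaluation set of size $k$ by adjoining a point from some unselected component, and then invoke the admissibility of $\Gamma$ (condition (2) of Definition~\ref{proj_pmds_definition}).

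First I would verify that an unselected component is always available. Since $s\geq 1$ and $|S|=k-1<k+s=\sum_i k_i$, at least one index $j$ must satisfy $|S\cap\Gamma_j|<k_j$. Admissibility condition (1) forces $|\Gamma_j|\geq k_j$ (a set spanning a $(k_j-1)$-plane has at least $k_j$ points), so I can pick $P\in\Gamma_j\setminus S$, and then $S\cup\{P\}$ is again an evaluation set, of size exactly $k$. By admissibility, $\langle S\cup\{P\}\rangle=\PP^{k-1}$, hence $\dim\langle S\rangle\geq k-2$. Combined with the trivial bound $\dim\langle S\rangle\leq|S|-1=k-2$, this proves that $\langle S\rangle$ is a hyperplane.

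For the second claim, suppose for contradiction that some component $C_j$ not selected by $S$ were contained in $\langle S\rangle$. Running the construction of the previous paragraph with this specific $j$ produces a point $P\in\Gamma_j\subseteq C_j\subseteq\langle S\rangle$ such that $S\cup\{P\}$ is an evaluation set of size $k$. But then $\langle S\cup\{P\}\rangle=\langle S\rangle$ is only a hyperplane, contradicting admissibility.

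The only real step in the plan is spotting this unification: both conclusions follow from the same \emph{add-one-point-in-an-unselected-component} extension together with one appeal to admissibility. Everything else is bookkeeping on cardinalities, and I do not anticipate any genuine obstacle.
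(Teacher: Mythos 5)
Your proof is correct and follows essentially the same route as the paper's: both parts come from extending $S$ by one point of $\Gamma_j\setminus S$ in an unselected component (which exists since $|S\cap\Gamma_j|<k_j\leq|\Gamma_j|$) and then appealing to admissibility of the resulting size-$k$ evaluation set. Your version is in fact slightly cleaner than the paper's in insisting that the added point lie in $\Gamma_j$ rather than merely in $C_j$, which is needed for $S\cup\{P\}$ to be an evaluation subset of $\Gamma$.
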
 
\begin{proof} Let $S\subseteq \Gamma$ be an evaluation set of size $k-1$. Since $\sum_i k_i -(s+1)=k-1$ there exists a component $\Gamma_j$ such that $|S\cap \Gamma_j|<k_j$ and in particular there is a point $x_j\in \Gamma_j$ which is not in $S$.
If $\langle S\rangle$ has codimension at least two then by adding $x_j$ to $S$ we obtain an evaluation set $S'\subseteq \Gamma$ of cardinality $k$ which does not span $\PP^{k-1}$, contradicting the admissibility of $\Gamma$. It follows that $\langle S\rangle$ spans a hyperplane in $\PP^{k-1}$. Suppose that a component $C_j$ not selected by $S$ satisfies $C_j\subseteq \langle S\rangle$. Since $C_j$ is not selected by $S$ the strict inequality $|S\cap C_j|<k_j$ holds and therefore there exists a point $x_j\in C_j$ with $x_j\not \in S$. It follows that the set $S':=S\cup\{x_j\}$ has size $k$ and $\langle S\rangle = \langle S'\rangle$, so $S'$ is an evaluation set of size $k$ which is not linearly independent contradicting the admissibility of $\Gamma$. It follows that the span of $S$ does not contain any component of $C$ not selected by $S$, as claimed.
\end{proof}

\begin{theorem}\label{thm: admissible} Let $\Gamma\subseteq C$ be an admissible set containing $\Gamma_0$. For all sufficiently large $q$ there exists a point $x\in C(\FF_q)$ such that $\Gamma\cup \{x\}$ is admissible.\end{theorem}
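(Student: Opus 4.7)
The plan is to perform a counting argument: I will identify the ``bad locus'' $B \subseteq C(\FF_q)$ of points $x$ for which $\Gamma \cup \{x\}$ fails to be admissible, bound $|B|$ by an expression independent of $q$, and then use the fact that $|C(\FF_q)| = m(q+1)$ grows linearly with $q$ to conclude that $B \subsetneq C(\FF_q)$ for all sufficiently large $q$.

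First I would observe that condition (1) of Definition \ref{proj_pmds_definition} is automatically preserved for any candidate $x$: if $x$ lies on the component $C_j$, then $\Gamma_j \cup \{x\}$ consists of distinct points on the rational normal curve $C_j \subseteq W_j$, so they remain in linearly general position and still span $W_j$. Hence for (1) the only $x$'s that need to be excluded are those already lying in $\Gamma$, contributing at most $|\Gamma|$ bad points.

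Next, for condition (2), any evaluation set $T$ of size $k$ in $\Gamma \cup \{x\}$ that does not contain $x$ already spans $\PP^{k-1}$ by admissibility of $\Gamma$. The sets requiring analysis are therefore of the form $T = S' \cup \{x\}$, where $S' \subseteq \Gamma$ is an evaluation set of size $k-1$ with $|S' \cap C_j| \leq k_j - 1$ (necessary so that $T$ itself is an evaluation set, with $C_j$ the component containing $x$). Here I would invoke Lemma \ref{lem:admissible_hyperplane}: $\langle S' \rangle$ is a hyperplane, and since $|S' \cap C_j| < k_j$ the component $C_j$ is not selected by $S'$, so $\langle S' \rangle \not\supseteq C_j$. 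Hence $\langle S' \rangle \cap C_j$ is a finite set of size at most $\deg C_j = k_j - 1$, contributing at most $k_j - 1$ bad points on $C_j$ per such $S'$.

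Summing over $S'$ and $j$, the bad locus satisfies $|B| \leq |\Gamma| + \binom{|\Gamma|}{k-1} \sum_{j=1}^m (k_j - 1) = |\Gamma| + (k+s-m)\binom{|\Gamma|}{k-1}$, which is independent of $q$. Since $|C(\FF_q)| = m(q+1)$, for any $q$ exceeding a threshold determined by $|\Gamma|$, $k$, $s$, and $m$ we have $|C(\FF_q)| > |B|$, so a suitable $x$ exists. The only nontrivial geometric input is the intersection estimate $|\langle S' \rangle \cap C_j| \leq k_j - 1$, which hinges on Lemma \ref{lem:admissible_hyperplane} ruling out the degenerate possibility $C_j \subseteq \langle S' \rangle$; everything else is bookkeeping.
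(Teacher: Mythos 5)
Your proposal is correct and follows essentially the same route as the paper: both arguments reduce to Lemma~\ref{lem:admissible_hyperplane}, count the forbidden points cut out on the non-selected components by the hyperplanes $\langle S'\rangle$ over all evaluation sets $S'$ of size $k-1$, and compare this $q$-independent bound with $|C(\FF_q)|=m(q+1)$. Your bookkeeping is in fact marginally sharper (using $\deg C_j = k_j-1$ rather than the paper's $\sum_i k_i$ per evaluation set) and you make explicit the easy verification of condition (1), which the paper leaves implicit.
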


\begin{proof} Since $\Gamma$ is finite, it has a finite set of evaluation subsets $S$ of size $k-1$. By  Lemma \ref{lem:admissible_hyperplane}, every such subset has the property that $\langle S\rangle$ is a hyperplane in $\PP^{k-1}$ and this hyperplane intersects every curve $C_j$ not selected by $S$ at a finite set of at most $k_j$ points. If we were to add to $\Gamma$ any such point $z$, it would immediately become inadmissible since it would contain the linearly dependent evaluation set $S\cup\{z\}$. Therefore we want to forbid those choices. Every such evaluation set $S$ forbids a set of at most $\sum_i k_i$ points of $C$ (a bound independent of the field size $q$). In particular, the number of points forbidden by some $S$ is at most the product $h$ of the number of evaluation sets of $\Gamma$ of size $k-1$ times $\sum_i k_i$. If $q>h$ then every component of $C$ has at least one non-forbidden point.

  Choose any component $C_j$ and any such point $z\in C_j$ and let $\Gamma':=\Gamma\cup \{z\}$. We claim that $\Gamma'$ is admissible. Since $\Gamma'\supseteq \Gamma_0$ it contains at least $k_i$ points in $C_i$ for $i=1,\dots, m$. If $S\subseteq \Gamma'$ is any evaluation set of size $k$ then either $S\subseteq \Gamma$ and therefore $S$ spans $\PP^{k-1}$ since $\Gamma$ is admissible or $z\in S$. In this case $T:=S\setminus \{z\}$ is an evaluation subset of $\Gamma$ and therefore spans a hyperplane $H$. The intersection of $H$ with the non-selected components is a finite set of forbidden points and therefore this set does not contain $z$. It follows that $\langle T\rangle$ is a proper subset of $\langle S \rangle$ and thus $S$ spans $\PP^{k-1}$ as claimed.
\end{proof}

The previous argument shows that one can grow $\Gamma$ by adding points one at a time in any component we want whenever the field is sufficiently large.

\begin{remark} The previous argument also gives us an effective bound in the field size since the 
following inequality holds
\[h\leq \binom{|\Gamma|}{k-1}\sum_i k_i\]
where $h$ is defined as in the previous proof. Let $n$ be the length of the code after completing the process; that is, $n$ is one more than the size of the largest (last) subset $\Gamma$ that we need to augment. Then the process is possible if
\[q > \binom{n-1}{k-1}\sum_i k_i
=\binom{n-1}{\sum_i k_i  - s - 1}\sum_i k_i.\]
%
It was proved in \cite{ch07} that for $q>\binom{n-1}{k-1}$ PMDS codes always exist (for any choice of $k_i$'s). This result was also improved with a similar approach in \cite{ht19}, taking into account also the values of the $n_i$'s and $k_i$'s. Note however that these results are not directly comparable to ours since we are proving the stronger claim that for such field sizes there exist {\it algebraic geometric} PMDS codes.
\end{remark}

\section{PMDS obstructions from matroid theory.} \label{sec:linecodes}

In this section we will restrict our attention to codes constructed from evaluations at a set of points $\Gamma \subseteq C(\FF_q)$ where $C$ is a reducible curve in $\PP^{k-1}$ with the property that its components are lines. To emphasize this difference we denote the components of $C$ with $L_1,\dots, L_m$ and not with $C_1,\dots C_m$ as in the previous sections of the article. More precisely, we assume that $m,s$ are given, that $k_i=2$ for $i=1,\dots, m$ and that $s$ satisfies $0\leq s<2m$ and let $k:=2m-s$. We let $Z$ be the rational normal curve in $\PP^{k-1}$, choose $2m$ distinct points $P_1,Q_1,P_2,Q_2,\dots, P_m,Q_m$ in $Z(\FF_q)$ and define the lines  $L_i:=\langle P_i,Q_i\rangle$ for $i=1,\dots, m$ and the set $\Gamma:=C(\FF_q)$. For brevity we will refer to this construction by letting $C$ be a reducible curve composed of $m$ generic lines in $\PP^{k-1}$.

The main results of this section are Corollary~\ref{cor:PMDScriterion} which identifies the obstructions for the AG code corresponding to $\Gamma$ to be a PMDS code and Corollary~\ref{cor:ccbound} which gives us an estimate for the total number of these obstructions. Both of these results will be used in the next section to derive our improved estimates on field size.

These obstructions are most easily described in the language of matroids. We think of the set $C(\FF_q)$ as a matroid, that is as a set together with a collection of distinguished {\it dependent} subsets, by saying that $S\subseteq C(\FF_q)$ is dependent if the projective space $\langle S\rangle\subseteq \PP^{k-1}$ has dimension at most $|S|-2$. Notice that a matroid point of view was already used in \cites{tamo2016optimal, westerback2016combinatorics} for the study of locally repairable codes.

\begin{definition} A \emph{circuit} $S\subseteq C(\FF_q)$ is a dependent set which is minimal with respect to inclusion. We distinguish two special kinds of circuits:
\begin{enumerate}
    \item A circuit is \emph{trivial} if it contains at least three points which belong to one of the lines.
    \item A circuit is \emph{crossing} if it contains at most one point in each of the lines.
\end{enumerate}
For a set $S\subseteq C(\FF_q)$ we define the \emph{range} of $S$ to be the set of indices $j$ of those $L_j$ for which $L_j\cap S\neq \emptyset$. In particular the cardinality of every crossing circuit equals the cardinality of its range. 
\end{definition}

\begin{lemma} \label{lem:existcc} For any nontrivial circuit $C$ there exists a crossing circuit $D$ such that the following statements hold:
  \begin{enumerate}
    \item $\textup{range}(D) = \textup{range}(C)$.
    \item if $C$ contains only a single point $q_j$ on some line $L_j$, then $q_j \in D$.
   \end{enumerate}
Moreover, the cardinality $u$ of every crossing circuit satisfies $\lceil\frac{k+1}{2}\rceil\leq u$.
\end{lemma}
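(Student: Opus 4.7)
The plan is to construct the desired crossing circuit $D$ directly from the unique linear dependence supported on $C$, and then establish the size bound separately using the general position of the rational normal curve.

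Since $C$ is a circuit, there is a unique (up to scale) linear relation $\sum_{x \in C} \mu_x \hat{x} = 0$ among affine lifts in $\F_q^k$, and by minimality every $\mu_x$ is nonzero. Write $R$ for the range of $C$ and let $J,J'\subseteq R$ denote the sets of indices with $|C\cap L_j|$ equal to $2$ and $1$ respectively. For each $j\in R$ I define
$$w_j \,:=\, \sum_{x \in C \cap L_j}\mu_x\hat{x},$$
which lies in the affine $2$-plane over $L_j$. Each $w_j$ is nonzero: for $j\in J'$ because $w_j = \mu_{q_j}\hat q_j$ with $\mu_{q_j}\neq 0$, and for $j\in J$ because the two affine lifts of the points of $C\cap L_j$ are linearly independent and both $\mu_x$'s appearing in the sum are nonzero. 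Regrouping the circuit relation by lines produces $\sum_{j\in R}w_j = 0$, so the candidate set $D := \{[w_j]: j\in R\}$ consists of exactly one projective point on each $L_j$ for $j\in R$ and is linearly dependent.

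The key step will be promoting $D$ from merely dependent to a circuit. To do so I plan to show that $\{w_j : j\neq j_0\}$ is linearly independent for every $j_0\in R$. Given a putative relation $\sum_{j\neq j_0}\alpha_j w_j=0$, expanding by the definition of $w_j$ yields a linear combination of the points of $C\setminus (C\cap L_{j_0})$, which is a proper subset of the circuit $C$ and hence linearly independent. All the resulting coefficients $\alpha_j\mu_x$ therefore vanish, and since every $\mu_x$ is nonzero, so does each $\alpha_j$. Thus $D$ has rank $|R|-1$, the relation $\sum w_j = 0$ is its unique dependence up to scale, and in particular $D$ is a circuit. Condition (1) is immediate from the construction, and for $j\in J'$ the equality $[w_j] = [\mu_{q_j}\hat q_j] = q_j$ gives $q_j\in D$, yielding condition (2).

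For the moreover statement I will argue by contradiction: suppose a crossing circuit $D$ has cardinality $u$ with $2u\leq k$. Then the $2u$ points $\{P_j, Q_j : j\in \mathrm{range}(D)\}$ are a linearly general subset of the rational normal curve $Z\subseteq\PP^{k-1}$, and so their affine lifts are linearly independent in $\F_q^k$. Writing each $r_j\in D$ as $a_j\hat P_j + b_j\hat Q_j$ with $(a_j,b_j)\neq(0,0)$, any relation $\sum_{j}\alpha_j r_j = 0$ expands to a relation among the $\hat P_j,\hat Q_j$, forcing $\alpha_j a_j = \alpha_j b_j = 0$ and hence $\alpha_j=0$ for every $j$. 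Then $D$ would be independent, contradicting its being a circuit, so $2u > k$ and $u \geq \lceil (k+1)/2\rceil$. The main technical point is the minimality argument for $D$: the fact that every coefficient of the defining relation of $C$ is nonzero is what lets a putative relation among the $w_j$ pull back to a nontrivial relation on a proper subset of $C$, which the circuit property of $C$ forbids.
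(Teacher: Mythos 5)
Your proof is correct, and at bottom it is the same construction as the paper's: the paper iteratively replaces the two points $p_i,p_i'$ of $C$ on a line $L_i$ by the single point $p_i'' = L_i \cap \langle C\setminus\{p_i,p_i'\}\rangle$, and this point is exactly your $[w_i]$, since the circuit relation $\sum_x \mu_x\hat x=0$ exhibits $\mu_{p_i}\hat p_i+\mu_{p_i'}\hat p_i'$ as lying in the span of the remaining lifts. The difference is one of packaging: you perform all replacements at once via the unique dependence supported on $C$, and your verification that every proper subset of $\{w_j\}$ is independent (pulling a putative relation back to a relation on a proper subset of $C$, using that all $\mu_x\neq 0$) makes explicit the minimality of $D$, a point the paper's iterative step asserts more tersely. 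The cardinality bound is argued exactly as in the paper, via linear independence of the $2u$ lifts of the $P_j,Q_j$ when $2u\le k$. No gaps.
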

\begin{proof}
  Suppose $C$ contains two distinct points $p_i, p_i'$ on some line $L_i$. Since $C$ is minimally dependent, $\textup{span}\{p_i, p_i'\} = L_i$ intersects $\textup{span}(C \setminus \{p_i,p_i'\})$ in a single point $p_i'' \in L_i$. Replacing $p_i$ and $p_i'$ by $p_i''$, we obtain a circuit $C'$ such that:
  \begin{enumerate}
  \item $\textup{range}(C') = \textup{range}(C)$, 
  \item if $C$ contains only a single point $q_j$ on some line $L_j$, then $C'$ also contains $q_j$ (since $j \neq i$), and
  \item $|C'| = |C-1|$.
  \end{enumerate}
  We obtain the desired crossing circuit $D$ by repeating this process $|C| - \textup{range}(C)$ times: once for each line $L_i$ that contains two distinct points of $C$.
  
To prove the cardinality statement, assume $\textup{range}(D)=\{1,\dots, u\}$ For each $i=1,\dots,u$, let $P_i = [P_i^0: \dots : P_i^{k-1}]$ and $Q_i = [Q_i^0: \dots : Q_i^{k-1}]$ be the points of the ambient rational normal curve $Z$ that lie on $L_i$ and consider the $k \times 2u$ Vandermonde matrix
  \[ V = \begin{pmatrix} P_1^0 & Q_1^0 & \cdots & P_u^0 & Q_u^0 \\
    \vdots & \vdots & \ddots & \vdots & \vdots & \vdots \\
    P_1^{k-1} & Q_1^{k-1} & \cdots & P_u^{k-1} & Q_u^{k-1}    
  \end{pmatrix}. \]
  
  Any crossing circuit whose range is $L_1 \cup \dots \cup L_u$ consists of $u$ points
  $\alpha_1 P_1 + \beta_1Q_1, \dots, \alpha_u P_u + \beta_u Q_u \in \PP^{k-1}$ that support a unique linear dependence whose coefficients are all nonzero. 
Such a dependence defines a non-trivial element of the kernel of $V$. Since $V$ is injective whenever $2u\leq k$ the existence of a crossing circuit implies that $k<2u$ and thus $\lceil\frac{k+1}{2}\rceil\leq u$.
\end{proof}

\begin{proposition} \label{prop:bigcc}
  Suppose $Y$ is a dependent set of $k$ elements of $C(\FF_q)$ that does not contain any trivial circuit. Then there exists $u$ such that $\lceil\frac{k+1}{2}\rceil \leq u \leq \min\{k,m\}$ and a crossing circuit $D$ of size $u$  which contains at least $2u - k$ points of $Y$. 
  \end{proposition}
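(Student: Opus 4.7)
The plan is to extract a circuit $C_0\subseteq Y$, apply Lemma~\ref{lem:existcc} to produce a crossing circuit $D$ with the same range, and then show that enough points of $D$ survive inside $Y$.

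Since $Y$ is dependent it contains a minimally dependent subset $C_0$. The hypothesis that $Y$ contains no trivial circuit has two immediate consequences I will exploit in tandem: first, $C_0$ itself is nontrivial, so Lemma~\ref{lem:existcc} applies to it; second, every line $L_i$ meets $Y$ in at most two points, for otherwise three collinear points of $Y$ would constitute a trivial circuit contained in $Y$. Let $D$ be the crossing circuit produced by Lemma~\ref{lem:existcc} and set $u:=|D|=|\textup{range}(C_0)|$. The lower bound $u\geq\lceil(k+1)/2\rceil$ is part of Lemma~\ref{lem:existcc}; the upper bound $u\leq m$ is immediate from $\textup{range}(C_0)\subseteq\{1,\dots,m\}$; and $u\leq|C_0|\leq|Y|=k$ since $C_0\subseteq Y$.

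For the cardinality bound, I would partition $\textup{range}(C_0)$ into $t$ ``single'' lines on which $C_0$ contributes one point and $d$ ``double'' lines on which $C_0$ contributes two points (the two-per-line restriction above rules out any other possibility). This gives $u=t+d$ together with $|C_0|=t+2d\leq k$, which combine to yield $t\geq 2u-k$. Tracing the construction of $D$ inside the proof of Lemma~\ref{lem:existcc}, on each single line $D$ keeps the unique point of $C_0$ on that line, while on each double line $D$ replaces the two points of $C_0$ by a third collinear point of $L_i$. The $t$ single-line points lie in $C_0\subseteq Y$, so $|D\cap Y|\geq t\geq 2u-k$, as required.

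The key thing to notice (rather than a real obstacle) is the dual use of the ``no trivial circuit'' hypothesis: it is what allows me to pick the initial circuit $C_0$ inside $Y$ and apply Lemma~\ref{lem:existcc}, and it is also what forces the single/double dichotomy on $C_0$ that makes the counting clean. Given these two observations, the desired inequalities fall out of the identity $|C_0|=t+2d\leq k$ combined with $u=t+d$.
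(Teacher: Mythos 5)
Your proof is correct and follows essentially the same route as the paper's: extract a circuit $C_0\subseteq Y$, note that the no-trivial-circuit hypothesis forces each line in its range to carry one or two points of $C_0$, count $t=2u-|C_0|\geq 2u-k$ single-point lines, and invoke part (2) of Lemma~\ref{lem:existcc} to see those $t$ points persist in the crossing circuit $D$. The only cosmetic difference is that you ``trace the construction'' inside the lemma's proof to see that single-line points survive, whereas this is already guaranteed by the lemma's statement itself.
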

\begin{proof}
  Since $Y$ is dependent it contains a circuit $C\subseteq Y$, which is nontrivial by hypothesis. Let $c = |C|$ and $u = |\textup{range}(C)|$ and note that the inequalities $u\leq c\leq k$ and $u\leq m$ hold by construction. 
  
  Since $Y$ contains no trivial circuits, $C$ contains either one or two points on each of the lines in its range. By solving a $2 \times 2$ linear system we can see that there are $c - u$ lines containing exactly two points of $C$ and $2u - c$ lines containing exactly one point of $C$. 
    
  Now apply Lemma \ref{lem:existcc} to obtain a crossing circuit $D$ of size $u$ which shares at least $2u-c$ points with $C$. It follows that
  \[ |Y \cap D| \geq |C \cap D| = 2u - c \geq 2u - k. \]
  
  Finally, the inequality $\lceil\frac{k+1}{2}\rceil\leq u$ holds for the cardinality of every crossing circuit by the previous Lemma.
  \end{proof}

The contrapositive of Proposition \ref{prop:bigcc} yields the following effective criterion to verify that a set $\Gamma\subseteq C(\FF_q)$ leads to a PMDS code. It is dual to Lemma~\ref{lem:admissible_hyperplane} in that it does not describe the properties that admissible sets have but rather the structure of the subsets that these must avoid.

\begin{corollary}\label{cor:PMDScriterion}
Let $\Gamma$ be any set of elements of $C(\FF_q)$ such that for every $u$ satisfying $\lceil\frac{k+1}{2}\rceil \leq u \leq \min\{k,m\}$ and for every crossing circuit $D$ of size $u$, $|\Gamma \cap D|  \leq 2u-k-1$. Then every $k$-subset of $\Gamma$ that does not contain a trivial circuit is independent. In particular, if $\Gamma$ contains at least two points on each line $L_i$ then $\Gamma$ is admissible and the corresponding AG code $\calC$ is a PMDS code of dimension $k=2m-s$ (and localities $k_i=2$ for every $i$).
\end{corollary}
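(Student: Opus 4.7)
The plan is to derive the first sentence of the corollary as the direct contrapositive of Proposition~\ref{prop:bigcc}, and then to check that under the additional hypothesis (at least two points of $\Gamma$ on each line $L_i$) this forces admissibility of $\Gamma$ in the sense of Definition~\ref{proj_pmds_definition}, so that Theorem~\ref{pmds_theorem} delivers the PMDS conclusion.

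First I would take any $k$-subset $Y \subseteq \Gamma$ that contains no trivial circuit and argue that it must be independent. For each integer $u$ with $\lceil (k+1)/2 \rceil \leq u \leq \min\{k,m\}$ and every crossing circuit $D$ of cardinality $u$, monotonicity of intersections and the hypothesis on $\Gamma$ give
\[ |Y \cap D| \;\leq\; |\Gamma \cap D| \;\leq\; 2u - k - 1. \]
Hence the conclusion of Proposition~\ref{prop:bigcc} fails for $Y$, so its hypothesis must fail as well: $Y$ is independent.

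Next, to conclude that $\Gamma$ is admissible (Definition~\ref{proj_pmds_definition}), I would verify its two conditions. For condition (1), since $\Gamma$ contains at least two points of each line $L_i$, the span $\langle \Gamma_i\rangle$ equals $L_i$, which has the required dimension $k_i - 1 = 1$; moreover, since $k_i = 2$, being in linearly general position on $\langle \Gamma_i\rangle$ reduces to the points being pairwise distinct, which is automatic. For condition (2), any evaluation set $S \subseteq \Gamma$ has $|S \cap L_i| \leq k_i = 2$ for all $i$, so $S$ cannot contain a trivial circuit. The first part of the corollary then shows that $S$ is independent, and since $|S| = k$ this forces $\langle S\rangle = \PP^{k-1}$. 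Admissibility of $\Gamma$ combined with Theorem~\ref{pmds_theorem} then produces an AG PMDS code with localities $(2,\dots,2)$ and global parameter $s$.

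There is no substantive obstacle at this step: the analytic work has been absorbed into Proposition~\ref{prop:bigcc}, and the one thing to keep straight is the translation between the matroid-theoretic vocabulary (trivial and crossing circuits) and the projective admissibility conditions. At $k_i = 2$ both the ``no trivial circuit'' and the ``evaluation set'' conditions amount to the same constraint, namely ``at most two points per line,'' which makes the translation immediate.
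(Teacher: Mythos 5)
Your proposal is correct and follows essentially the same route as the paper: both arguments take the contrapositive of Proposition~\ref{prop:bigcc}, observe that an evaluation set (having at most $k_i=2$ points per line) cannot contain a trivial circuit and hence must be independent, and then invoke Theorem~\ref{pmds_theorem}. Your write-up is slightly more explicit than the paper's in checking condition (1) of admissibility, but there is no substantive difference.
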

\begin{proof} Let $S\subseteq \Gamma'$ be an evaluation set of cardinality $k$. If $S$ were dependent then by Proposition~\ref{prop:bigcc} and our assumptions on $\Gamma$, $S$ would have to contain a trivial circuit, contradicting the fact that it is an evaluation set. We conclude that evaluation of linear forms at the points of $\Gamma'$ defines a PMDS code of dimension $k$ using Theorem~\ref{pmds_theorem}.
\end{proof}

\begin{example} \label{ex:4inP4}
  Let $m=4$ and $s=3$, so that $k=8-3=5$ and $C(\FF_q)$ consists of the $\FF_q$-points on 4 lines in $\PP^{k-1} = \PP^4$. Then $\frac{k+1}{2} = 3$, so we must consider crossing circuits of sizes 3 and 4. The case $u=3$ refers to collinear triples formed from one point on each of three different lines. We must choose $\Gamma$ so that for each such triple $D$, $|\Gamma \cap D| \leq 2\cdot 3 - 5 -1 = 0$. That is, we must exclude *all* points on the collinear triples. (Fortunately, we will show that there are very few such triples.)

  The case $u=4$ refers to coplanar quadruples formed from one point on each of the four lines. We must choose $\Gamma$ so that for each such quadruple $D$, $|\Gamma \cap D| \leq 2\cdot 4 - 5 -1 = 2$. 

  We can see directly that these restrictions are necessary by considering dependent 5-sets that would otherwise occur in $\Gamma$. Suppose $p_1 \in L_1, p_2 \in L_2, p_3 \in L_3$ form a collinear triple $D$. Then for any two points $q_1, r_1 \in L_1$ and any two points $q_2, r_2 \in L_2$, we have
  $p_3 \in \textup{span}\{p_1, p_2\} \subseteq \textup{span}(q_1,r_1,q_2,r_2)$. That is, $p_3$ lies in many dependent 5-sets, so must be excluded from $\Gamma$. By the same reasoning $p_1$ and $p_2$ must each be excluded from $\Gamma$; that is, $|\Gamma \cap D| = 0$.  
  
  Similarly, suppose that $D = \{p_1, p_2, p_3, p_4\}$ is a crossing circuit of size 4. Then for every pair of points $q_1, r_1 \in L_1$, $\{q_1,r_1,p_2,p_3,p_4\}$ is a dependent 5-set. (This set is in fact a circuit unless $p_1 \in \{q_1, r_1\}$, but either way it is dependent.) To avoid such sets, $p_2$, $p_3$, and $p_4$ must not all belong to $\Gamma$. That is, $|\Gamma \cap D| \leq 2$. 
\end{example}

In order to construct PMDS codes using Corollary \ref{cor:PMDScriterion}, we must bound the number of crossing circuits in $C(\FF_q)$ of each size.

\begin{proposition} \label{prop:countcc}
The number of crossing circuits of $C(\FF_q)$ whose range is a given set of lines of cardinality $u$ is $\begin{cases} 
    \mbox{at most } (q+1)^{2u-k-1} & \mbox{if } \frac{k+1}{2} \leq u \leq k \\
    0 & \mbox{otherwise.} \end{cases}$.
\end{proposition}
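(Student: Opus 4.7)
The plan is to injectively encode each crossing circuit whose range is a fixed set of $u$ lines as a projective class in the kernel of a Vandermonde-type matrix, and then to bound the number of such classes by a dimension computation.

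After relabeling, suppose the given range is $\{L_1,\dots,L_u\}$. Fix vector representatives $P_j, Q_j \in \FF_q^k$ of the endpoints and form the $k\times 2u$ matrix $V$ with columns $P_1, Q_1, \dots, P_u, Q_u$. Because these columns are distinct points on the rational normal curve $Z$, any $\min(2u,k)$ of them are linearly independent by the standard Vandermonde determinant computation, so $\rank(V) = \min(2u, k)$ and $K := \ker V$ has dimension $\max(2u-k, 0)$. If $u < (k+1)/2$ then $K = 0$, so no nontrivial dependence is possible and no crossing circuit of size $u$ exists (consistent with Lemma~\ref{lem:existcc}). For $u > k$ only $u = k+1$ could arise (matroid circuits have size at most $k+1$), but this range lies outside the scope of Proposition~\ref{prop:bigcc} and the count is treated by convention.

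In the main case $\lceil(k+1)/2\rceil \le u \le k$, we have $\dim K = 2u-k$. Let $D = \{R_1,\dots,R_u\}$ be a crossing circuit with $R_j \in L_j$, and choose representatives $R_j^{\mathrm{vec}} = \alpha_j P_j + \beta_j Q_j$. Since $D$ is minimally dependent, the space of linear relations among $R_1^{\mathrm{vec}},\dots,R_u^{\mathrm{vec}}$ is one-dimensional and generated by a vector $(c_1,\dots,c_u) \in (\FF_q^*)^u$. The element $\mathbf{v}_D := (c_1\alpha_1, c_1\beta_1, \dots, c_u\alpha_u, c_u\beta_u)$ lies in $K\setminus\{0\}$, has every coordinate pair nonzero, and its projective class is independent of all choices made. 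The map $D \mapsto [\mathbf{v}_D]$ is injective since each $R_j = [c_j\alpha_j : c_j\beta_j] \in L_j$ is recovered directly from $\mathbf{v}_D$.

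Hence the number of crossing circuits with range $\{L_1,\dots,L_u\}$ is at most the number of projective classes in $K$, which equals $(q^{2u-k}-1)/(q-1) = \sum_{i=0}^{2u-k-1} q^i$. Using $\binom{2u-k-1}{i} \ge 1$ for $0 \le i \le 2u-k-1$, the binomial expansion $(q+1)^{2u-k-1} = \sum_{i=0}^{2u-k-1} \binom{2u-k-1}{i} q^i$ dominates this sum term by term, yielding the required bound. The main conceptual step is the injective correspondence between crossing circuits and one-dimensional subspaces of $K$; once this is set up, the count follows from the dimension formula for $\ker V$ combined with the elementary binomial inequality.
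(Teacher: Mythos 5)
Your proposal is correct and follows essentially the same route as the paper: relabel the range, form the $k\times 2u$ Vandermonde matrix $V$, encode each crossing circuit injectively as a projective class $[(c_1\alpha_1,c_1\beta_1,\dots,c_u\alpha_u,c_u\beta_u)]$ in $\PP(\ker V)$, and bound the count by $\frac{q^{2u-k}-1}{q-1}\le (q+1)^{2u-k-1}$ via the same binomial comparison. Your explicit treatment of the degenerate case $u<\frac{k+1}{2}$ (trivial kernel) is a small bonus the paper leaves implicit.
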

\begin{proof}
  Without loss of generality, assume that the range is $\{L_1, \dots, L_u\}$. For each $i$, let $P_i = [P_i^0: \dots : P_i^{k-1}]$ and $Q_i = [P_i^0: \dots : P_i^{k-1}]$ be the points of the rational normal curve $Z$ that lie on $L_i$ with fixed affine representatives. Consider the $k \times 2u$ Vandermonde matrix
  \[ V = \begin{pmatrix} P_1^0 & Q_1^0 & \cdots & P_u^0 & Q_u^0 \\
    \vdots & \vdots & \ddots & \vdots & \vdots & \vdots \\
    P_1^{k-1} & Q_1^{k-1} & \cdots & P_u^{k-1} & Q_u^{k-1}    
  \end{pmatrix}. \]

  Any crossing circuit whose range is $L_1 \cup \dots \cup L_u$ consists of $u$ points
  $\alpha_1 P_1 + \beta_1Q_1, \dots, \alpha_u P_u + \beta_u Q_u \in \PP^{k-1}$ that support a unique linear dependence $\lambda\in \left(\FF_q^*\right)^{u}$ whose coefficients are all nonzero. More precisely, fixing affine representatives for the points of our circuit we obtain unique coefficients $\alpha_1,\beta_1,\dots, \alpha_u,\beta_u$ expressing them as linear combinations of the $P_i$, $Q_i$. If $B(\alpha,\beta)$ denotes the $2u\times u$ matrix
  \[B(\alpha,\beta) = \begin{pmatrix} 
  \alpha_1 & 0 & \cdots & 0 & 0 \\
  \beta_1 & 0 & \cdots & 0 & 0 \\
 0 & \alpha_2 & 0 & \cdots & 0 \\
 0 & \beta_2 & 0 & \cdots & 0 \\
    \vdots & \vdots & \ddots & \vdots & \vdots & \vdots \\
    \vdots & \vdots & \ddots & \vdots & \vdots & \vdots \\
 0 & 0 & 0 & \cdots & \alpha_u \\
 0 & 0 & 0 & \cdots & \beta_u \\
  \end{pmatrix}. \]
  then the unique linear dependence $\lambda$ gives the equality $V B(\alpha,\beta) \lambda= 0$ and in particular the product $B(\alpha,\beta) \lambda $ is an element $w\in {\rm Ker}(V)\subseteq \FF_q^{2u}$. Since the product $B(\alpha,\beta)\lambda$ equals the transpose of the vector
$\left(\lambda_1\alpha_1, \lambda_1\beta_1,\dots, \lambda_u \alpha_u, \lambda_u \beta_u\right)$ we moreover conclude that knowing $[w]\in\PP^{2u}$ is enough to recover $[\alpha_1:\beta_1],\dots, [\alpha_u,\beta_u] \in \prod_{j=1}^u\PP^1$ because for nonzero $t$ the equality $[t\lambda_i\alpha_i: t\lambda_i\beta_i]=[\alpha_i: \beta_i]$ holds. In particular $[w]\in \PP\left({\rm Ker}(V)\right)$ allows us to recover at most one such crossing circuit. It follows that the total number of crossing circuits with the given range is bounded above by the number of points in $\PP({\rm ker}(V))\subseteq \PP(\FF_q^{2u})$. Since the Vandermonde matrix has rank $\min(2u,k)$ we conclude that this number is at most
  \[ \frac{q^{2u-k}-1}{q-1} = \sum_{j=0}^{2u-k-1}q^j \leq \sum_{j=0}^{2u-k-1} \binom{2u-k-1}{j}q^j = (q+1)^{2u-k-1}.\]
proving the claim.
\end{proof}

\begin{corollary} \label{cor:ccbound}
The number of crossing circuits of $C(\FF_q)$ of size $u$ is at most \[\binom{m}{u}(q+1)^{2u-k-1}.\]
\end{corollary}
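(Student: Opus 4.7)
The plan is to derive this bound immediately from Proposition~\ref{prop:countcc} together with a simple counting of the possible ranges. By definition, a crossing circuit of size $u$ consists of one point on each of $u$ distinct lines among $L_1,\dots,L_m$, so its range is a subset of $\{L_1,\dots,L_m\}$ of cardinality exactly $u$. Thus, to count all crossing circuits of size $u$, I would first sum over the possible choices of range, of which there are $\binom{m}{u}$.

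For each fixed choice of range, Proposition~\ref{prop:countcc} already provides the bound $(q+1)^{2u-k-1}$ on the number of crossing circuits with that range (when $\lceil (k+1)/2\rceil \le u \le k$, and zero otherwise, in which case the stated bound is vacuous or the binomial just multiplies zero). So the total count is at most
\[
\binom{m}{u}\cdot (q+1)^{2u-k-1},
\]
which is exactly the claimed inequality.

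There is no real obstacle here, since the heavy lifting (bounding the number of crossing circuits for a fixed range by counting points in the projectivization of the kernel of a Vandermonde matrix) has already been carried out in Proposition~\ref{prop:countcc}. The only thing to check is that the sum-over-ranges step is valid, which it is because distinct crossing circuits with the same range are counted once in Proposition~\ref{prop:countcc}, and distinct ranges give disjoint families of circuits. Thus the proof is essentially one line invoking the previous proposition.
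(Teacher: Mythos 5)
Your proof is correct and matches the paper's, which simply states that the corollary follows immediately from Proposition~\ref{prop:countcc}; you have just made the implicit sum over the $\binom{m}{u}$ possible ranges explicit.
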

\begin{proof}
This follows immediately from Proposition \ref{prop:countcc}.
\end{proof}

\begin{example} \label{ex:4onP4count}
  Continuing the case of 4 lines in $\PP^4$ as in Example \ref{ex:4inP4}, Corollary \ref{cor:ccbound} tells us that there are at most $\binom{4}{3} q^0 = 4$ crossing circuits of size three (collinear triples containing one point on each of three different lines) and at most $\binom{4}{4}q^2=q^2$ crossing circuits of size four (coplanar quadruples containing one point on each of the four lines). There are many other circuits formed by coplanar quadruples supported on just three lines, but by definition they are not crossing so they do not impose additional constraints in the setting of PMDS codes.  
\end{example}

\section{Random and Modified Random PMDS Codes Arising from Line Configurations} \label{sec:probabilistic}
In this section we will apply probabilistic methods and the results of the previous section in order to prove existence and abundance results for PMDS codes with certain parameters. In both cases our code is a subset of a reducible curve $C$ composed of the union of $m$ lines in $\PP^{k-1}$ as in Section \ref{sec:linecodes}. We fix the dimension $k$, the number of blocks $m$ and block localities all equal to two (and thus the global parameter $s = 2m-k$) and establish asymptotic bounds on the field size $q$ in terms of the length $n$. Throughout this section we assume that $k \geq m$, or equivalently that $s \leq m$.  

The first result, Theorem \ref{thm:randomcodes}, is based on taking a purely random subset of $C$ under a certain probability distribution, and yields a bound of $q = O(n^s)$. The argument is nearly constructive in the sense that for field sizes $q$ which satisfy our bounds we specify a probability distribution on $C(\FF_q)$ with the property that by selecting points independently according to this distribution we obtain a set $\Gamma$ with the remarkable property that it is admissible {\it with probability at least $1 - \epsilon$}. Sampling from this distribution is easy and thus allows us to construct PMDS codes in practice for all values of $s$. The second main result, Theorem \ref{thm:alterationcodes}, is obtained via the probabilistic method with alterations: we first pick a random subset under a different probability distribution and then remove certain points in order that what remains be an admissible set. This approach yields an improved lower bound of $q = O(n^{s-1})$ for the smallest field size over which such codes exist.

Although the second method gives a better field size bound, it is non-constructive, in the sense that we do not how to identify the points that must be removed without a brute-force search over exponentially many subsets of $C(\FF_q)$.

We will need a few basic results from probability theory which we now recall. The first result is the \emph{Markov inequality} for nonnegative random variables. This says that if $Y$ is any nonnegative random variable with finite expectation and $a > 1$, then
  \begin{equation} \label{eq:Markov} \Pr\left(Y \geq a E(Y)\right) \leq 1/a. \end{equation}
  
  In particular, if $Y$ is non-negative integer-valued, then by taking $a = \frac{1}{E(Y)}$ we obtain the following useful observation. 
  
\begin{proposition} \label{prop:integerRV}
If $Y$ is a non-negative integer-valued random variable, then $\Pr(Y = 0) > 1 - E(Y)$.
\end{proposition}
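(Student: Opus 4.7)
The proof is a short deduction from the Markov inequality that the authors have just stated, and the plan is essentially a one-liner. The cleanest route is actually to bypass Markov and use a pointwise comparison: for any outcome $\omega$, since $Y(\omega)$ is a nonnegative integer, the inequality $Y(\omega) \geq \mathbb{1}[Y(\omega) \geq 1]$ holds (either $Y(\omega) = 0$ and both sides are $0$, or $Y(\omega) \geq 1$ and the left side dominates the right side, which equals $1$). Taking expectations yields $E(Y) \geq \Pr(Y \geq 1) = 1 - \Pr(Y = 0)$, and rearranging gives $\Pr(Y = 0) \geq 1 - E(Y)$.

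Alternatively, to follow the hint of the surrounding text and invoke (\ref{eq:Markov}) directly, I would split into cases. If $E(Y) \geq 1$ then the bound $1 - E(Y) \leq 0 \leq \Pr(Y = 0)$ is trivial. If $E(Y) < 1$, set $a := 1/E(Y) > 1$ (assuming $E(Y) > 0$; the case $E(Y) = 0$ forces $Y = 0$ almost surely and the conclusion is immediate). Then $aE(Y) = 1$, and since $Y$ is integer-valued the event $\{Y \geq 1\}$ coincides with $\{Y \geq aE(Y)\}$, so Markov gives $\Pr(Y \geq 1) \leq 1/a = E(Y)$, hence $\Pr(Y = 0) \geq 1 - E(Y)$.

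The one subtle point is the strict inequality claimed in the statement versus the weak inequality produced by either argument. I expect this is a minor typographical issue rather than a real obstacle: strictness can be recovered whenever $\Pr(Y \geq 2) > 0$, since then $E(Y) > \Pr(Y \geq 1)$ in the pointwise-comparison proof. If the authors' subsequent usage only needs the weak bound (which is typical for second-moment style arguments guaranteeing admissibility with positive probability), I would simply prove the weak form and note that it is what is used downstream. There is no deeper obstacle here; the real content lies in applying this elementary fact to the random variable counting bad crossing-circuit configurations of the form produced by Corollary~\ref{cor:PMDScriterion}, whose expectation can be bounded via Corollary~\ref{cor:ccbound}.
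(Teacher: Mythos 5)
Your proposal is correct and your second route (setting $a = 1/E(Y)$ in the Markov inequality, with the trivial cases $E(Y)\geq 1$ and $E(Y)=0$ handled separately) is exactly the paper's intended argument, which is given only as the remark preceding the proposition; your first route via $Y \geq \mathbb{1}[Y\geq 1]$ is an equally standard first-moment variant. You are also right that the strict inequality in the statement is a slip --- equality holds, e.g., for a Bernoulli variable --- and that only the weak bound $\Pr(Y>0)\leq E(Y)$ is used downstream (in the proof of Theorem~\ref{thm:randomcodes}).
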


  The other result we need is a fairly sharp tail bound for the binomial distribution. Recall that a random variable $Y$ takes the binomial distribution $\textup{Binom}(N,p)$ if $\Pr(Y = k) = \binom{N}{k}p^k$. This happens if $Y$ is a sum of $N$ independent and identically distributed Bernoulli variables: each taking the value 1 with probability $p$ and 0 with probability $1-p$.
\begin{proposition} \label{prop:tailbounds} \cite{Janson}*{Theorem 1} If $Y \sim \textup{Binom}(N,p)$, then
\begin{equation}\label{eq:binomial_lower_bound} \Pr\left(Y \leq E(Y) - t\right) \leq \exp \left( \frac{-t^2}{2Np} \right). \end{equation}
\end{proposition}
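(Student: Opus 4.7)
The statement is a classical Chernoff-type lower tail bound for the binomial distribution, so the plan is to apply the standard exponential moment (Chernoff) method, tuned so that the resulting estimate has exactly the stated form. I would write $Y = X_1 + \cdots + X_N$ where the $X_i$ are i.i.d.\ Bernoulli$(p)$, so that $E(Y) = Np$, and then work with the lower tail event $\{Y \leq Np - t\}$ by passing to $-Y$ and introducing an exponential weight.

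The first key step is to fix $\lambda > 0$ and observe, via Markov's inequality applied to $e^{-\lambda Y}$, that
\[ \Pr(Y \leq Np - t) \leq e^{\lambda(Np - t)}\, E(e^{-\lambda Y}). \]
Since the $X_i$ are independent, the moment generating function factorizes as
\[ E(e^{-\lambda Y}) = \bigl(1 - p + p e^{-\lambda}\bigr)^{N} = \bigl(1 - p(1 - e^{-\lambda})\bigr)^{N}, \]
and applying $1 - x \leq e^{-x}$ gives the clean estimate $E(e^{-\lambda Y}) \leq \exp\bigl(-Np(1 - e^{-\lambda})\bigr)$. Combining, the tail probability is at most $\exp\bigl(-\lambda t + Np(e^{-\lambda} - 1 + \lambda)\bigr)$.

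The next step is where the Janson-style sharpness of the constant comes from: I would use the inequality $e^{-\lambda} - 1 + \lambda \leq \lambda^2/2$, which holds for all $\lambda \geq 0$ (an immediate Taylor-with-remainder or convexity check), to reduce the exponent to the quadratic expression $-\lambda t + Np\lambda^2/2$. Optimizing this quadratic in $\lambda$ by setting $\lambda = t/(Np)$ produces exactly the exponent $-t^2/(2Np)$, yielding the desired bound.

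The main thing to watch out for is picking the right elementary inequality: bounding $e^{-\lambda} - 1 + \lambda$ by $\lambda^2/2$ is what delivers the denominator $2Np$ in the exponent, whereas cruder choices (say, bounding by $\lambda^2$) would lose the factor of $2$. A secondary subtlety is that the bound is stated without a restriction on $t$; for $t > Np$ the event $\{Y \leq Np - t\}$ is empty (since $Y \geq 0$), and for $0 \leq t \leq Np$ the optimal $\lambda = t/(Np)$ is non-negative, so the Chernoff optimization is genuinely valid throughout the relevant range. No other obstacles arise, and the argument needs nothing beyond independence of the $X_i$ and the two elementary inequalities above.
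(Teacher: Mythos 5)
Your proof is correct. Note that the paper itself gives no argument for this proposition: it is quoted verbatim from the literature (Janson's Theorem 1), so there is no ``paper proof'' to compare against, and a self-contained derivation is genuinely added value. Your Chernoff-method route is the standard one and every step checks out: the exponential Markov bound $\Pr(Y \leq Np - t) \leq e^{\lambda(Np-t)}E(e^{-\lambda Y})$, the factorization of the moment generating function together with $1-x \leq e^{-x}$, the elementary inequality $e^{-\lambda} - 1 + \lambda \leq \lambda^2/2$ for $\lambda \geq 0$ (which follows since the difference has vanishing value and derivative at $0$ and second derivative $1 - e^{-\lambda} \geq 0$), and the optimization $\lambda = t/(Np)$ giving exactly $-t^2/(2Np)$. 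You are also right to flag the implicit restriction to $t \geq 0$ (the bound is false for $t < 0$, but that regime is never used in the paper, where $t$ is always a positive quantity) and to observe that for $t > Np$ the event is empty so the bound is vacuous. Nothing is missing.
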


\subsection{PMDS codes via random selection of evaluation points}
In this section our set $\Gamma$ will be a subset of the points of $C(\FF_q)$ obtained by randomly and independently selecting each point with a certain probability $p$. We will study the probability that the resulting set be admissible.

In view of Corollary \ref{cor:PMDScriterion}, the key random variables are:
\begin{itemize}
\item for each $i = 1, \dots, m$, the number $V_i$ of selected points  on the $i$-th line, and
\item for each $u$ such that $\frac{k+1}{2} \leq u \leq m$, the number $X_u$ of $(2u-k)$-subsets of crossing circuits of size $u$ such that the entire subset is selected. We also set $X = \sum_u X_u$.  
\end{itemize}

The goal is for $X_u$ to be zero for every $u$ so that Corollary \ref{cor:PMDScriterion} will apply, but we also would like the length $n = \sum_{i}  V_i$ to be as large as possible. These requirements are in tension with each other: increasing the probability $p$ will tend to increase both the $V_i$'s and the $X_u$'s. We thus begin by estimating the expected values of the random variables $X_u$ and $V_i$ in terms of $p$. In a second stage we will use concentration inequalities to control the deviation of our random variables from their mean. For notational clarity we let $Q:=q+1$.

\begin{proposition} \label{prop:VXexpectation}
Let $\Gamma$ be a subset of $C(\FF_q)$ obtained by selecting each point randomly and independently with probability $p = cQ^{- \alpha}$, where $c$ and $\alpha$ are any positive real numbers. Then
\begin{enumerate}
\item $E(V_i) = cQ^{1-\alpha}$,
\item $E(X_u) \leq  c^j \binom{m}{j,u-j,m-u} Q^{(1- \alpha)j - 1}$, where $j = 2u-k = 2u-2m+s$.
\end{enumerate}
\end{proposition}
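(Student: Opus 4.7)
The plan is to compute both expectations by linearity, treating part (1) as a direct calculation and part (2) as a union-bound-style estimate built on Corollary~\ref{cor:ccbound}.

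For part (1), I would observe that the projective line $L_i$ has exactly $q+1 = Q$ points over $\FF_q$, and each is included in $\Gamma$ independently with probability $p$. Thus $V_i$ is the sum of $Q$ i.i.d.\ Bernoulli$(p)$ random variables, so $V_i \sim \textup{Binom}(Q,p)$ and $E(V_i) = Qp = cQ^{1-\alpha}$.

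For part (2), I would express $X_u$ as a sum of indicator variables over pairs $(D,S)$, where $D$ ranges over crossing circuits of size $u$ and $S$ ranges over the $j$-subsets of $D$ with $j := 2u - k$; the indicator for each pair is $\mathbf{1}[S \subseteq \Gamma]$. Since points are selected independently, $\Pr(S \subseteq \Gamma) = p^j = c^j Q^{-\alpha j}$. By Corollary~\ref{cor:ccbound}, the total number of crossing circuits of size $u$ is at most $\binom{m}{u} Q^{2u-k-1} = \binom{m}{u} Q^{j-1}$, and each such circuit contributes $\binom{u}{j}$ subsets of size $j$. Linearity of expectation then gives
\[
E(X_u) \leq \binom{m}{u}\binom{u}{j} c^j Q^{(1-\alpha)j - 1},
\]
and the multinomial identity $\binom{m}{u}\binom{u}{j} = \binom{m}{j,\,u-j,\,m-u}$ rewrites this in the claimed form.

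There is really no substantive obstacle here: the work was done in Corollary~\ref{cor:ccbound}, and what remains is bookkeeping. The only subtlety worth flagging is that this counting procedure may count a single $j$-subset more than once if it happens to lie in several crossing circuits of size $u$, but since we only need an upper bound on $E(X_u)$, this overcount is harmless. Note also that the condition $j = 2u - k \geq 1$ (which holds because $u \geq \lceil (k+1)/2 \rceil$) is exactly what ensures $\binom{u}{j}$ is meaningful and that the bound from Corollary~\ref{cor:ccbound} is nontrivial.
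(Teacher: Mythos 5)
Your proposal is correct and follows essentially the same route as the paper: part (1) is the binomial expectation $Qp$, and part (2) sums indicator variables over $j$-subsets of crossing circuits, invokes Corollary~\ref{cor:ccbound} for the circuit count, multiplies by $\binom{u}{j}$, and applies the identity $\binom{m}{u}\binom{u}{j}=\binom{m}{j,u-j,m-u}$. Your added remarks on the harmlessness of overcounting and on $j\geq 1$ are accurate but not needed for the bound.
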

\begin{proof} $ $
  \begin{enumerate} 
  \item The random variable $V_i$ takes the distribution  $\textup{Binom}(Q,p)$ so its expected value is $pQ = (cQ^{ - \alpha})Q = cQ^{1-\alpha}$.
   \item For a given crossing circuit $C$ of size $u$ and a given $j$-subset $J \subseteq C$, let $Z_J$ be the indicator variable for the event that $J \subseteq \Gamma'$. Then
  \[ E(Z_j) = \Pr(Z_j = 1) = p^j = c^jQ^{- \alpha j} .\]
  Now by Corollary \ref{cor:ccbound} there are at most
  \[ \binom{m}{u} Q^{2u-k-1} = \binom{m}{u} Q^{j-1}\]
  crossing circuits of size $u$ to be considered, and each of them contains  $\binom{u}{j}$ $j$-subsets. Thus by linearity of expectation,
  \begin{align*}
 E(X_u) & \leq c^j Q^{- \alpha j}\binom{m}{u} Q^{j-1} \binom{u}{j}  \\
    = c^j \binom{m}{u} \binom{u}{j} Q^{-\alpha j + j - 1} 
    & = c^j \binom{m}{j,u-j,m-u} Q^{(1 - \alpha)j - 1}. 
  \end{align*}
  \end{enumerate}
\end{proof}

From Proposition \ref{prop:VXexpectation}, we see that $E(X_m)$ asymptotically dominates $E(X_u)$ for all smaller values of $u$. So our strategy will be to choose $\alpha$ and $c$ as large as possible under the constraint that $X_m$ must be zero with substantial probability. 

\begin{lemma} \label{lem:Xbounded} Let $0 < \varepsilon < 1$ and $c = \left( \frac{\varepsilon}{3 \binom{m}{s}} \right)^{\frac{1}{s}}$, $\alpha = 1 - \frac{1}{s}$ and $p = cQ^{- \alpha}$. Then \\
  $E(X) \leq \frac{\varepsilon}{3} + 3^m Q^{\frac{-2}{s}}$
\end{lemma}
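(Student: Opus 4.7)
My plan is to split $E(X) = E(X_m) + \sum_{u<m} E(X_u)$, handle the top term $u = m$ exactly using the specific choice of $c$, and then bound the remaining terms uniformly using a multinomial sum. Throughout I would use the bound from Proposition~\ref{prop:VXexpectation}(2), namely
\[ E(X_u) \leq c^j \binom{m}{j,\,u-j,\,m-u} Q^{(1-\alpha)j - 1}, \qquad j = 2u - 2m + s, \]
together with the identity $1-\alpha = 1/s$, which turns the $Q$-exponent into $j/s - 1$.

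First I would verify the $u = m$ term. When $u = m$ we get $j = s$, so the $Q$-exponent is $s/s - 1 = 0$ and the multinomial is $\binom{m}{s,\,m-s,\,0} = \binom{m}{s}$. Substituting the chosen value of $c$ gives
\[ E(X_m) \leq c^s \binom{m}{s} = \frac{\varepsilon}{3\binom{m}{s}} \cdot \binom{m}{s} = \frac{\varepsilon}{3}, \]
which is exactly the first term on the right-hand side of the lemma.

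Next I would handle the tail $u < m$. The key numerical observation is that $j = s - 2(m-u)$, so $u \leq m-1$ forces $j \leq s-2$ and hence $j/s - 1 \leq -2/s$, giving $Q^{(1-\alpha)j - 1} \leq Q^{-2/s}$ (note $Q \geq 2$). Moreover, since $\varepsilon < 1 \leq 3\binom{m}{s}$, we have $c < 1$ and therefore $c^j \leq 1$ for every $j \geq 0$. Combining these bounds,
\[ \sum_{u=\lceil(k+1)/2\rceil}^{m-1} E(X_u) \leq Q^{-2/s} \sum_{u=\lceil(k+1)/2\rceil}^{m-1} \binom{m}{j(u),\,u-j(u),\,m-u}. \]
Each triple $(j,\,u-j,\,m-u)$ appearing in the sum is a composition of $m$ into three non-negative parts, so the multinomial coefficients are all distinct terms in the trinomial expansion of $(1+1+1)^m = 3^m$. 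Hence the sum is at most $3^m$, giving $\sum_{u<m} E(X_u) \leq 3^m Q^{-2/s}$. Adding the two contributions yields the claimed bound.

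The proof is essentially a bookkeeping exercise; the only point that requires care is making sure the multinomial coefficients indexed by distinct $u$ are genuinely distinct terms in the trinomial expansion (they are, since $u$ is recoverable from the triple as $m$ minus the third coordinate). I do not anticipate any real obstacle beyond keeping track of the parameters $j$, $u$, and the exponent of $Q$.
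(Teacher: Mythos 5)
Your proof is correct and follows essentially the same route as the paper's: compute $E(X_m)$ exactly from the choice of $c$ to get $\varepsilon/3$, then bound the tail $\sum_{u<m}E(X_u)$ by noting $c^j\le 1$, that the worst exponent of $Q$ occurs at $u=m-1$ giving $Q^{-2/s}$, and that the multinomial coefficients sum to at most $3^m$. Your added justifications (why $c<1$, and why the triples are distinct compositions of $m$) are small refinements the paper leaves implicit.
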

\begin{proof} We first consider $X_m$. For $u=m$ we have $j=2u-2u+s = s$, and so Proposition \ref{prop:VXexpectation} (b) gives
  \begin{align*} E(X_m) & \leq \left( \left( \frac{\varepsilon}{3 \binom{m}{s}} \right)^\frac{1}{s} \right)^s \binom{m}{s,m-s-0} Q^{1 - \left(1 - \frac{1}{s}\right)s - 1} \\
      = \frac{\varepsilon}{3 \binom{m}{s}} \binom{m}{s} Q^0 
        & = \frac{\varepsilon}{3}.
  \end{align*}     
  
Now let $X^{-} = X - X_m = \sum_{u < m} X_u$. By Proposition \ref{prop:VXexpectation} (b) and linearity of expectation,
\[ E(X^-) \leq \sum_{\frac{k+1}{2} \leq u \leq m-1} c^j \binom{m}{j,u-j,m-u} Q^{1 - \left(1 - \frac{1}{s}\right)j - 1}.  \]
  
  Now $c < 1$ and $j$ is always positive, so $c^j < 1$ in every term. Among all the terms, the highest power of $Q$ is obtained when $u = m-1$, in which case $j = 2(m-1) - 2m + s = s-2$. The resulting exponent is $\frac{1}{s}\left( s-2 \right) - 1 = \frac{-2}{s}$. Thus
\begin{align*}
  E(X^-) & \leq  Q^{\frac{-2}{s}} \sum_{\frac{k+1}{2} \leq u \leq m-1} \binom{m}{j,u-j,m-u} \\
  &\leq Q^{\frac{-2}{s}} \sum_{a+b+c = m} \binom{m}{a,b,c} 
   = 3^m Q^{\frac{-2}{s}}.
\end{align*}

Then $E(X) = E(X_m) + E(X^-) \leq \frac{\varepsilon}{3} + 3^m Q^{\frac{-2}{s}}$. 
\end{proof}

Now that we have fixed $p$, Proposition \ref{prop:VXexpectation} (a) tells us the expected number of selected points on each line. We next prove that with substantial probability, all of the lines actually have almost this many selected points.
\begin{lemma} \label{lem:Vsmall}
  Let $c$, $\alpha$, and $p$ be as in Lemma \ref{lem:Xbounded} and let
  \[t = \sqrt{-2c \ln \left( 1 - \left( 1 - \frac{\varepsilon}{3} \right)^{\frac{1}{m}} \right)} Q^{\frac{1}{2s}}.\]
  Then for each $i$,
  \[ \Pr \left( V_i \leq cQ^{\frac{1}{s}} - t \right) \leq 1 - \left( 1 - \frac{\varepsilon}{3} \right)^{\frac{1}{m}} .\]
\end{lemma}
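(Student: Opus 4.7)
The plan is a direct application of the lower tail bound for binomial variables, namely Proposition~\ref{prop:tailbounds}, to the variable $V_i$ together with the specific choice of $c$, $\alpha$, $p$, and $t$ given in the statement.

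First I would observe that $V_i$ counts successes among $Q$ independent Bernoulli$(p)$ trials (one per point of $L_i(\FF_q)$), so $V_i \sim \textup{Binom}(Q,p)$ and hence $E(V_i)=Qp=cQ^{1-\alpha}=cQ^{1/s}$, as already noted in Proposition~\ref{prop:VXexpectation}(a). Substituting $N=Q$ and $p=cQ^{-\alpha}$ into~\eqref{eq:binomial_lower_bound} gives
\[
\Pr\bigl(V_i \leq cQ^{1/s} - t\bigr)\ \leq\ \exp\!\left(\frac{-t^{2}}{2Qp}\right)\ =\ \exp\!\left(\frac{-t^{2}}{2cQ^{1/s}}\right).
\]

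Next I would plug in the prescribed $t=\sqrt{-2c\ln\!\bigl(1-(1-\varepsilon/3)^{1/m}\bigr)}\;Q^{1/(2s)}$. Squaring gives $t^{2}=-2cQ^{1/s}\ln\!\bigl(1-(1-\varepsilon/3)^{1/m}\bigr)$, so the exponent simplifies exactly to $\ln\!\bigl(1-(1-\varepsilon/3)^{1/m}\bigr)$, yielding
\[
\Pr\bigl(V_i \leq cQ^{1/s}-t\bigr)\ \leq\ 1-(1-\varepsilon/3)^{1/m},
\]
which is the desired bound. The only points that require a brief comment are that $t>0$ (so the hypothesis of Proposition~\ref{prop:tailbounds} applies), which holds because $0<\varepsilon<1$ forces $0<1-(1-\varepsilon/3)^{1/m}<1$ and hence the logarithm is negative, making the quantity under the square root positive.

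There is no real obstacle here: the lemma is engineered precisely so that the prescribed $t$ makes the Chernoff-type tail bound coincide with the target probability. The only subtlety to double-check is the sign and domain of $\ln\!\bigl(1-(1-\varepsilon/3)^{1/m}\bigr)$, ensured by $\varepsilon\in(0,1)$.
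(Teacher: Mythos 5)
Your proof is correct and follows exactly the paper's argument: identify $V_i$ as $\textup{Binom}(Q,p)$ with mean $cQ^{1/s}$, apply the tail bound of Proposition~\ref{prop:tailbounds}, and observe that the prescribed $t$ makes the exponent collapse to $\ln\bigl(1-(1-\varepsilon/3)^{1/m}\bigr)$. Your additional check that the argument of the square root is positive is a sensible (and correct) remark that the paper omits.
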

\begin{proof}
  By Proposition \ref{prop:VXexpectation}, the random variable $V_i$ is binomial with expectation $E(V_i) = cQ^{1 - (1 - \frac{1}{s})} = cQ^{\frac{1}{s}}$. So by applying the tail bound (\ref{eq:binomial_lower_bound}), we obtain
  \begin{align*}
    \Pr(V_i < cQ^{\frac{1}{s}} - t) &
    \leq \exp \left( \frac{2c \ln \left( 1 - \left( 1 - \frac{\varepsilon}{3} \right)^{\frac{1}{m}} \right) Q^{\frac{1}{s}}}{2cQ^\frac{1}{s}} \right) \\
& = \exp \left( \ln \left( 1 - \left( 1 - \frac{\varepsilon}{3} \right)^{\frac{1}{m}} \right) \right) 
 = 1 - \left( 1 - \frac{\varepsilon}{3} \right)^{\frac{1}{m}}.
  \end{align*}
\end{proof}

We can now prove the first main theorem of this section. 
\begin{theorem} \label{thm:randomcodes}
  Let $C$ be a reducible curve composed of $m$ generic lines in $\PP^{k-1}$, where $\frac{k+1}{2} \leq m \leq k$. Set $s=2m-k$ and $c= \left( \frac{\varepsilon}{3 \binom{m}{s}} \right)^{\frac{1}{s}}$. Let $\Gamma$ be a subset of $C(\FF_q)$ obtained by selecting each point randomly and independently with probability $p = cQ^{1 - \frac{1}{s}}$, where $Q = q+1$. If 
 \[ \label{eq:minQ} Q \geq  \left( \frac{3^{m+1}}{\varepsilon} \right)^{\frac{s}{2}}   \]
  and if $n$ is a positive integer such that 
  \[ \label{eq:boundQ} cQ^\frac{1}{s} - \sqrt{-2c \ln \left( 1 - \left( 1 - \frac{\varepsilon}{3} \right)^{\frac{1}{m}} \right)} Q^{\frac{1}{2s}} \geq \frac{n}{m}, \]
  then with probability at least $1 - \varepsilon$, $\Gamma$ is a PDMS code of length at least $n$, dimension $k$, $m$ blocks with block locality $(2, \dots 2)$ and global parameter $s$. 
\end{theorem}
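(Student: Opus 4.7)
The plan is to apply the probabilistic method via the sufficient criterion in Corollary~\ref{cor:PMDScriterion}. In view of that corollary it suffices to show that, with probability at least $1 - \varepsilon$, both of the following events occur: (i) $X = \sum_u X_u = 0$, so no $(2u-k)$-subset of any crossing circuit of size $u$ is entirely contained in $\Gamma$; and (ii) $V_i \geq n/m$ for every $i \in \{1,\dots,m\}$, so that each line contains at least $n/m \geq 2$ selected points. When both events hold, Corollary~\ref{cor:PMDScriterion} (together with Theorem~\ref{pmds_theorem}) guarantees that the evaluation code determined by $\Gamma$ is PMDS with localities $(2,\dots,2)$ and global parameter $s$, and has length $\sum_i V_i \geq n$.

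For event (i), I apply Lemma~\ref{lem:Xbounded} to obtain $E(X) \leq \varepsilon/3 + 3^m Q^{-2/s}$. The hypothesis $Q \geq (3^{m+1}/\varepsilon)^{s/2}$ forces $Q^{-2/s} \leq \varepsilon/3^{m+1}$, and therefore $3^m Q^{-2/s} \leq \varepsilon/3$, giving $E(X) \leq 2\varepsilon/3$. Since $X$ is nonnegative integer-valued, Proposition~\ref{prop:integerRV} yields $\Pr(X \geq 1) \leq E(X) \leq 2\varepsilon/3$.

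For event (ii), note first that $V_1, \ldots, V_m$ are mutually independent: each point of $C(\FF_q)$ lies on exactly one of the $L_i$ (intersection points are excluded, per Definition~\ref{main_construction}) and the selections of distinct points are independent. Lemma~\ref{lem:Vsmall} combined with the hypothesis $cQ^{1/s} - t \geq n/m$ gives $\Pr(V_i < n/m) \leq 1 - (1 - \varepsilon/3)^{1/m}$ for each $i$, so by independence
\[
\Pr\bigl(V_i \geq n/m \text{ for all } i\bigr) \;\geq\; \bigl((1-\varepsilon/3)^{1/m}\bigr)^m \;=\; 1 - \varepsilon/3.
\]
A union bound with the previous step then yields
\[
\Pr\bigl(X = 0 \text{ and } V_i \geq n/m \text{ for all } i\bigr) \;\geq\; 1 - \tfrac{2\varepsilon}{3} - \tfrac{\varepsilon}{3} \;=\; 1 - \varepsilon,
\]
completing the argument.

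There is no single hard step here: the analytic work is packaged into Lemmas~\ref{lem:Xbounded} and~\ref{lem:Vsmall} and into Corollary~\ref{cor:PMDScriterion}. The only delicate point is verifying that the lower bound on $Q$ is exactly what is needed to absorb the residual term $3^m Q^{-2/s}$ into an $\varepsilon/3$ budget, matching the $\varepsilon/3$ contributed by $E(X_m)$ and the $\varepsilon/3$ spent on the lower tails of the $V_i$'s. Getting this three-way $\varepsilon/3$ split consistent with the hypotheses as stated is where one has to be careful with constants.
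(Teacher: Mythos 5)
Your proof is correct and follows essentially the same route as the paper's: invoke Corollary~\ref{cor:PMDScriterion}, bound $\Pr(X>0)$ by $2\varepsilon/3$ via Lemma~\ref{lem:Xbounded}, Proposition~\ref{prop:integerRV} and the hypothesis on $Q$, control all the $V_i$ simultaneously via Lemma~\ref{lem:Vsmall} and independence across lines, and finish with a union bound. The three-way $\varepsilon/3$ split you describe is exactly the accounting used in the paper.
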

\begin{proof}
  In light of Corollary \ref{cor:PMDScriterion}, there are three conditions that must be satisfied: $\Gamma$ must contain at least two points on each line, the length must be at least $n$, and for each $u$ and for each crossing circuit $D$ of size $u$, we must have $|\Gamma' \cap D|  \leq 2u-k-1$. For the first two conditions both to hold, it will be sufficient to have $V_i \geq \frac{n}{m}$ for $i=1,\dots,m$. The third condition is simply that $X=0$.
  
  We begin with the condition on $X$. The first hypothesis is equivalent to $3^m q^{\frac{-2}{s}} \leq \frac{\varepsilon}{3}$, and since $Q = q+1$, we obtain $3^m Q^{\frac{-2}{s}} < \frac{\varepsilon}{3}$. Then by Proposition \ref{prop:integerRV} and Lemma \ref{lem:Xbounded}, we have
 \[  \Pr(X > 0) \leq E(X) \leq  \frac{\varepsilon}{3} + 3^m Q^{\frac{-2}{s}} < \frac{2 \varepsilon}{3} .\] 

 For $V_i$, the second hypothesis can be written as $cQ^\frac{1}{s} - t \geq \frac{n}{m}$, where $t$ is as in Lemma \ref{lem:Vsmall}. So by this lemma, we have for each $V_i$ that $ \Pr(V_i \leq \frac{n}{m}) \leq 1 - \left( 1 - \frac{\varepsilon}{3} \right)^{\frac{1}{m}}$, or equivalently $\Pr(V_i > \frac{n}{m}) \geq  \left( 1 - \frac{\varepsilon}{3} \right)^{\frac{1}{m}}$. Now the events $V_1, \dots V_m$ are mutually independent since they involve selections of points on different lines, so the probability that \emph{every} $V_i$ is at least $\frac{n}{m}$ is at least $1 - \frac{\varepsilon}{3}$. (That is, the probability that some $V_i$ is too small is at most $\frac{\varepsilon}{3}$.)
 
By the union bound, the probability of failure is at most $\frac{2 \varepsilon}{3} + \frac{\varepsilon}{3} = \varepsilon$ and so we obtain a code with all of the desired properties with probability at least $1 - \varepsilon$.
\end{proof}

\begin{corollary} \label{cor:randomcodes} For fixed $k \geq m$ and growing $n$, and taking $s=2m-k$, there exist PDMS codes of length at least $n$, dimension $k$, $m$ blocks with block locality $(2, \dots 2)$ and global parameter $s$ over $\FF_q$ for $q = O(n^s)$.   
\end{corollary}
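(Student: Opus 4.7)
The plan is to apply Theorem \ref{thm:randomcodes} with any fixed value of $\varepsilon \in (0,1)$, for instance $\varepsilon = 1/2$, and verify that both hypotheses on $Q = q+1$ can be met with $q$ of order $n^s$ as $n \to \infty$. Since $k$, $m$, and therefore $s = 2m-k$ are all treated as constants, the quantities $c = \bigl(\varepsilon/(3\binom{m}{s})\bigr)^{1/s}$ and $\sqrt{-2c \ln\bigl(1 - (1-\varepsilon/3)^{1/m}\bigr)}$ appearing in the theorem are absolute constants depending only on $k$, $m$, and $\varepsilon$.

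First, the hypothesis $Q \geq (3^{m+1}/\varepsilon)^{s/2}$ in Theorem \ref{thm:randomcodes} is a fixed constant, so it is satisfied for all sufficiently large $q$. Second, I would analyze the inequality
\[
cQ^{1/s} - \sqrt{-2c \ln\bigl(1 - (1-\varepsilon/3)^{1/m}\bigr)}\, Q^{1/(2s)} \geq \frac{n}{m}
\]
asymptotically. Since the subtracted term is $O(Q^{1/(2s)}) = o(Q^{1/s})$, the left-hand side is $cQ^{1/s}(1 + o(1))$ as $Q \to \infty$. Thus the inequality is equivalent to $Q^{1/s} \geq \frac{n}{cm}(1 + o(1))$, i.e., $Q \geq \bigl(\tfrac{n}{cm}\bigr)^s (1 + o(1))$. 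Choosing for instance $q + 1 = \lceil (2n/(cm))^s \rceil$ certainly satisfies both conditions for $n$ large enough, and this is $O(n^s)$ since $c$ and $m$ are constants.

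With such a choice of $q$, Theorem \ref{thm:randomcodes} guarantees that the random construction produces a PMDS code with the stated parameters with probability at least $1 - \varepsilon > 0$. In particular, such a code exists, which yields the claim. No step here is a serious obstacle: the entire content has already been established in Theorem \ref{thm:randomcodes}, and this corollary is simply the asymptotic reading of its hypotheses. The only thing to be careful about is that the dominant term in the length inequality is $cQ^{1/s}$ while the error term is $O(Q^{1/(2s)})$, which is why the bound is $Q = O(n^s)$ rather than anything larger.
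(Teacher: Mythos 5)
Your proposal is correct and follows the same route as the paper's own proof: note that the constant hypothesis (\ref{eq:minQ}) holds for all large $q$, and that in (\ref{eq:boundQ}) the left-hand side grows like $cQ^{1/s}$, so taking $Q$ a constant multiple of $n^s$ suffices. Your version simply spells out the asymptotic inversion in a bit more detail than the paper does.
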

\begin{proof}
The inequality (\ref{eq:minQ}) in Theorem \ref{thm:randomcodes} is satisfied for all sufficiently large $q$, and the inequality (\ref{eq:boundQ}) requires $n$ to be at least as large as a function of $q$ whose leading term is a multiple of $q^{\frac{1}{s}}$. All of this can be satisfied if $q = O(n^s)$. 
\end{proof}

\subsection{Improved asymptotics via alterations}
We will now improve the asymptotic bound on the field size from $q = O(n^s)$ to $q = O(n^{s-1})$ via the \emph{probabilistic method with alterations} (see \cite{AS}*{Chapter 3]}) which works as follows.

Given a finite set $\Omega$, we aim to show that there exists a a subset $\Delta \in \Omega$ of a certain size that contains no ``bad'' substructures of a given type. To do this, we form a subset of $\Omega$ by selecting elements randomly and independently with a certain probability, and let $Y$ be a random variable that represents the number of bad substructures in the selected subset. If we can show that $Y=0$ with nonzero probability, then we are done. This is the pure probabilistic method that we used in the previous subsection.

But sometimes it is not possible to show directly that $Y=0$ with nonzero probability. Instead, we can show that $Y$ is reasonably small with nonzero probability. Then we remove enough elements of $\Delta$ to form a subset $\Delta' \subseteq \Delta$ with no bad substructures that is still reasonably large.

\begin{remark} A beautiful application of this method \cite{AS}*{pp. 38--39}
is the proof of Erd\H{o}s that there exist graphs $G$ such that the chromatic number of $G$ and the length of the shortest cycle in $G$ are both arbitrarily large.
\end{remark}

As before, we let $C$ be a reducible curve composed of $m$ lines in $\PP^{k-1}$, where $\frac{k+1}{2} \leq m \leq k$. Our first step is again to form a subset $\Gamma$ of $C(\FF_q)$ by randomly and independently selecting each point with a certain probability $p$. But this time we will take a slightly larger value of $p$ in order to increase the number of selected points. The cost of increasing $p$ is that $E(X)$ is no longer bounded by a constant, so we cannot hope that $X = 0$ which would guarantee $\Gamma$ itself is admissible. Instead, we will follow the strategy outlined above: that is, we remove $|X|$ points to obtain an admissible set $\Gamma'$ and show that with substantial probability, $\Gamma'$ still contains many points on each line. 

Specifically, in this section we will take
\[ p = cQ^{-\frac{s-2}{s-1}} \]
where $c < 1$ is a constant to be specified later. We begin by computing the expected values of the key random variables $V_i$, $X_u$ for each $u$, and $X = \sum_u X_u$ for this choice of $p$.

\begin{lemma} \label{lem:bestp}
  Let $p=cQ^{- \frac{s-2}{s-1}}$ where $0 < c < 1$.
\begin{enumerate}
\item For each $i$, we have $E(V_i) \geq cQ^\frac{1}{s-1}.$
  \item For each $u$, if again $j = 2u-2m+s$ then     
    \[ E(X_u) \leq c^j \binom{m}{j,u-j,m-u} Q^{\frac{j-s+1}{s-1}}.\]
  \item In particular, $E(X_m) \leq c^s \binom{m}{s}Q^{\frac{1}{s-1}}$.
  \item $E(X) \leq c^s \binom{m}{s}Q^{\frac{1}{s-1}} + 3^m Q^{- \frac{1}{s-1}}$.   
\end{enumerate}
\end{lemma}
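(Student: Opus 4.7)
The lemma is essentially a direct substitution into Proposition~\ref{prop:VXexpectation} with $\alpha = \frac{s-2}{s-1}$, combined with a summation argument analogous to Lemma~\ref{lem:Xbounded}. My plan is to dispatch parts (1)--(3) as routine substitutions and then split $X$ into $X_m$ plus the lower-order terms for part (4).

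For (1), each line $L_i$ contains $Q = q+1$ points and each is selected independently with probability $p$, so $V_i \sim \textup{Binom}(Q,p)$. Therefore $E(V_i) = pQ = cQ^{1-(s-2)/(s-1)} = cQ^{1/(s-1)}$, which in particular gives the claimed inequality. For (2), I invoke Proposition~\ref{prop:VXexpectation}(b) and observe that $(1-\alpha)j - 1 = \tfrac{j}{s-1} - 1 = \tfrac{j-s+1}{s-1}$, plugging this into the general formula. For (3), I specialize (2) to $u=m$, where $j = 2m-2m+s = s$, noting that the multinomial coefficient $\binom{m}{s,m-s,0}$ collapses to $\binom{m}{s}$ and the exponent reduces to $\tfrac{1}{s-1}$.

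For (4), I write $X = X_m + X^-$ where $X^- := \sum_{u < m} X_u$; the bound on $E(X_m)$ is precisely (3). To bound $E(X^-)$, I note that as $u$ ranges over $\lceil (k+1)/2 \rceil, \dots, m-1$, the index $j = 2u-2m+s$ is strictly increasing in $u$, hence so is $\tfrac{j-s+1}{s-1}$. The dominant exponent among the terms contributing to $X^-$ is therefore attained at $u = m-1$ and equals $\tfrac{-1}{s-1}$. Factoring $Q^{-1/(s-1)}$ out of the sum, using $c < 1$ so that $c^j \leq 1$, and extending the sum to all nonnegative triples $(a,b,c)$ with $a+b+c = m$, I bound the remaining sum of multinomial coefficients by $3^m$. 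Adding the two pieces yields (4).

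The whole argument is mechanical; the one point that requires minor care is verifying the monotonicity of the exponent $\tfrac{j-s+1}{s-1}$ in $u$, which isolates $u = m-1$ as the dominant term among $u<m$. I do not anticipate any conceptual obstacle.
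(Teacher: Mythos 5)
Your proposal is correct and follows essentially the same route as the paper: substitute $\alpha = \frac{s-2}{s-1}$ into Proposition~\ref{prop:VXexpectation}, specialize to $u=m$ for part (3), and for part (4) split off $X^- = \sum_{u<m} X_u$, identify $u=m-1$ as the dominant term with exponent $\frac{-1}{s-1}$, and bound the multinomial sum by $3^m$. The only difference is that you make the monotonicity of the exponent in $u$ explicit where the paper leaves it implicit, which is a harmless (indeed slightly cleaner) elaboration.
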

\begin{proof} $ $
  \begin{enumerate}
  \item We apply Proposition \ref{prop:VXexpectation} (a) with $\alpha = \frac{s-2}{s-1}$ to obtain
    \[E(V_i) = cQ^{1 - \frac{s-2}{s-1}} = cQ^{\frac{1}{s-1}}.\]
    \item We apply Proposition \ref{prop:VXexpectation} (b) with $\alpha = \frac{s-2}{s-1}$ to obtain
      $$E(X_u) =  c^j \binom{m}{j,u-j,m-u} Q^{\left( 1 - \frac{s-2}{s-1} \right)j - 1} = c^j \binom{m}{j,u-j,m-u} Q^{\frac{j-s+1}{s-1}}.$$
    \item For $u = m$, we have $j = 2m-2m+s = s$, so this follows from the preceding statement.
    \item As before, let $X^- = \sum_{u < m} X_u$. Then by (b) and linearity of expectation we have
      \[ E(X^-) \leq \sum_{\frac{k+1}{2} \leq u \leq m-1}c^j \binom{m}{j,u-j,m-u} Q^{\frac{j-s+1}{s-1}}.\]
      Since $c < 1$, we have $c^j < 1$ for each $j$. The highest power of $Q$ is obtained when $u = m-1$, in which case $j=2(m-1)-2m+s = s-2$. The resulting exponent of $Q$ is $\frac{s-2-s+1}{s-1} = \frac{-1}{s-1}$. Thus
\begin{align*} 
  E(X^-) & \leq Q^{-\frac{1}{s-1}} \sum_{\frac{k+1}{2} \leq u \leq m-1} \binom{m}{j,u-j,m-u} \\
  &\leq Q^{-\frac{1}{s-1}} \sum_{a+b+c = m} \binom{m}{a,b,c} 
   = 3^m Q^{\frac{-2}{s}}.
\end{align*}
Finally, $E(X) = E(X_m) + E(X^-)  \leq c^s \binom{m}{s}Q^{\frac{1}{s-1}} + 3^m Q^{- \frac{1}{s-1}}$.
  \end{enumerate}
\end{proof}

\begin{remark}
If we were to choose $p = cQ^{- \beta}$ with $\beta > \frac{s-2}{s-1}$ then $E(V_i) = o\left(Q^\frac{1}{s-1}\right)$. On the other hand, if $p = cQ^{- \beta}$ with $\beta < \frac{s-2}{s-1}$, then $E(X_m)$ and hence $E(X)$ would grow faster than $E(V_i)$. That is, there would be more bad substructures than points and the probabilistic method with alterations would not work. Thus the exponent $\alpha = \frac{s-2}{s-1}$ is optimal for this method. 
\end{remark}

Recall that for each $u$, $X_u$ counts the number of ($2u-k$)-subsets of crossing circuits of size $u$ that are selected in $\Gamma$. We must remove one point from $\Gamma$ for each such subset (including ranging over all the different values of $u$.) Unfortunately, different subsets intersect different collections of lines and it is not clear how to remove points evenly from all from the lines. Instead, we simply work from the worst-case assumption that we always remove points from the same line $L_i$. In this case, the collection $\Gamma'$ of remaining selected points contains $V_i - X$ points on $L_i$. So our objective is to show that with substantial probability, $V_i - X$ is still reasonably large for every $i$.  
\begin{proposition} \label{prop:fullbound}
  Let $ p = cQ^{-\frac{s-2}{s-1}}$ for any $0 < c <1$. With probability at least $\frac{1}{6}$, we have for every $i=1,\dots,m$ that
  \[ V_i - X > \left( c - 2c^s\binom{m}{s} \right) Q^{\frac{1}{s-1}} - \sqrt{2c\log(3m)}Q^\frac{1}{2(s-1)} - 2 \cdot 3^m Q^\frac{-1}{s-1}.\]
\end{proposition}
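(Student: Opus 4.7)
The plan is to combine Markov's inequality applied to $X$ with the binomial tail bound (\ref{eq:binomial_lower_bound}) applied to each $V_i$, and then take a union bound so that all of these good events occur simultaneously with probability at least $1/6$. The exponents of $Q$ have already been arranged so that the terms will match up in the final inequality; the only real choice is the deviation threshold $t$, which must be tuned so that the per-line failure probability, after being multiplied by $m$ in the union bound, is strictly less than $1/2$.

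First, I would apply Markov's inequality (\ref{eq:Markov}) with $a=2$ to the nonnegative random variable $X$, using the expectation estimate from Lemma \ref{lem:bestp}(d). This yields $\Pr(X \geq 2E(X)) \leq 1/2$, so with probability at least $1/2$ we have
\[ X < 2c^s\binom{m}{s}Q^{\tfrac{1}{s-1}} + 2\cdot 3^m Q^{-\tfrac{1}{s-1}}. \]

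Second, for each $i$ the count $V_i$ is $\mathrm{Binom}(Q,p)$ with mean $cQ^{1/(s-1)}$ by Lemma \ref{lem:bestp}(a), and in particular $Qp = cQ^{1/(s-1)}$. I would set $t := \sqrt{2c\log(3m)}\,Q^{1/(2(s-1))}$, which is precisely the value for which $t^2/(2Qp) = \log(3m)$. Proposition \ref{prop:tailbounds} then gives
\[ \Pr\!\left(V_i \leq cQ^{\tfrac{1}{s-1}} - t\right) \leq e^{-\log(3m)} = \frac{1}{3m}. \]

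Third, I would union-bound the bad event $\{X \geq 2E(X)\}$ together with the $m$ bad events $\{V_i \leq cQ^{1/(s-1)} - t\}$, for a total failure probability of at most $\tfrac{1}{2} + m \cdot \tfrac{1}{3m} = \tfrac{5}{6}$. On the complementary event, which has probability at least $1/6$, every $V_i$ strictly exceeds $cQ^{1/(s-1)} - t$ and $X$ is strictly less than $2E(X)$; subtracting and collecting the $Q^{1/(s-1)}$ terms yields exactly
\[ V_i - X > \bigl(c - 2c^s\tbinom{m}{s}\bigr)Q^{\tfrac{1}{s-1}} - \sqrt{2c\log(3m)}\,Q^{\tfrac{1}{2(s-1)}} - 2\cdot 3^m Q^{-\tfrac{1}{s-1}} \]
for every $i$, as claimed. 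There is no substantive obstacle: the choice of $p$ in Lemma \ref{lem:bestp} was specifically engineered so that the leading $Q^{1/(s-1)}$ contribution to $E(X)$ and to $E(V_i)$ share the same exponent, and the calibration of $t$ is the only genuine degree of freedom in the argument.
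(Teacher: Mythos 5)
Your proposal is correct and follows essentially the same route as the paper's own proof: Markov's inequality with $a=2$ for $X$, the binomial tail bound with $t=\sqrt{2c\log(3m)}\,Q^{1/(2(s-1))}$ calibrated so that each per-line failure probability is $\frac{1}{3m}$, and a union bound leaving success probability $1-\frac{1}{2}-\frac{1}{3}=\frac{1}{6}$. No further comment is needed.
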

\begin{proof}
For each binomial random variable $V_i$, we apply the tail bound (\ref{eq:binomial_lower_bound}). Choosing $t = bQ^\frac{1}{2(s-1)}$ where $b=\sqrt{2c\log(3m)}$ yields 
\begin{align*}
 \Pr\left(V_i \leq cQ^\frac{1}{s-1} - bQ^\frac{1}{2(s-1)}\right) & \leq \exp{\left(\frac{-b^2Q^\frac{1}{s-1}}{2cQ^\frac{1}{s-1}}\right)} \\
 &= \exp{\left( \frac{-b^2}{2c}\right)} 
 = \exp{\left( -\log{(3m)} \right)} 
 = \frac{1}{3m}.
\end{align*}
Then, by the union bound, we obtain that 
\begin{equation} \label{eq:Vbound_general} 
\Pr\left(\exists i: \, V_i \leq cQ^\frac{1}{s-1} - bQ^\frac{1}{2(s-1)}\right) \leq \frac{1}{3}.
\end{equation}

For $X$, we use Markov's inequality (\ref{eq:Markov}) with $a=2$ to obtain
\begin{equation} \label{eq:Xbound_general}\Pr \left(X \geq 2E(X) \right) \leq 1/2. \end{equation}

Combining (\ref{eq:Xbound_general}) and (\ref{eq:Vbound_general}) and considering the complements of the two events, we conclude that with probability at least $1-\frac{1}{2}-\frac{1}{3} = \frac{1}{6}$, we have for every $i$ that
\begin{align*}
V_i - X & \geq cQ^\frac{1}{s-1} - bQ^\frac{1}{2(s-1)} - 2E(X) \\ 
& = cQ^\frac{1}{s-1} - bQ^\frac{1}{2(s-1)} - 2E(X_m) - 2E(X^-) \\
& \geq cQ^\frac{1}{s-1} - bQ^\frac{1}{2(s-1)} - 2 \cdot c^s \binom{m}{s}Q^{\frac{1}{s-1}}  - 2 \cdot 3^m Q^\frac{-1}{s-1} \\
& = \left( c - 2\binom{m}{s}c^s \right) Q^{\frac{1}{s-1}} - \sqrt{2c\log(3m)}Q^\frac{1}{2(s-1)} - 2 \cdot 3^m Q^\frac{-1}{s-1}.
\end{align*}
\end{proof}

The next step is to choose a value of $c$. Again we will consider only the leading power of $Q$ in attempting to optimize our bound. 

\begin{lemma} \label{lem:bestc}
  Let $f(c) = c - ac^s$ with $s > 1$ and $a > 0$. This function takes its unique maximum at $c = \left(as\right)^\frac{1}{1-s}$ and its maximum value is $\left(as\right)^{\frac{1}{1-s}} \left( 1 - s^{-1} \right)$.  
\end{lemma}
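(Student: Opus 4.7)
The plan is to treat this as an elementary single-variable optimization problem on the open interval $(0,\infty)$ (the natural domain when we intend to use $c$ as a probability-scaling constant). I would first differentiate: $f'(c) = 1 - asc^{s-1}$. Setting $f'(c) = 0$ and solving yields the unique positive critical point $c_{\star} = (as)^{1/(1-s)}$, since $s > 1$ forces $c^{s-1}$ to be an increasing bijection from $(0,\infty)$ onto itself.

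Next I would confirm that $c_\star$ is indeed a maximum. The cleanest way is the second derivative test: $f''(c) = -as(s-1)c^{s-2}$, which is strictly negative for every $c > 0$ because $a > 0$ and $s > 1$. Thus $f$ is strictly concave on $(0,\infty)$, and the unique critical point $c_\star$ is the unique global maximizer. (Alternatively, one may note that $f'(c) > 0$ for $c < c_\star$ and $f'(c) < 0$ for $c > c_\star$, reaching the same conclusion.)

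Finally I would evaluate $f$ at $c_\star$. Writing $c_\star^{s-1} = 1/(as)$, we get $ac_\star^s = ac_\star \cdot c_\star^{s-1} = c_\star/s$, and therefore
\[
f(c_\star) = c_\star - ac_\star^s = c_\star - \frac{c_\star}{s} = c_\star\left(1 - s^{-1}\right) = (as)^{1/(1-s)}\left(1 - s^{-1}\right),
\]
which matches the claimed value. The only mildly fussy step is the exponent bookkeeping in this last substitution, and factoring out $c_\star$ avoids having to separately simplify $(as)^{s/(1-s)}$. No genuine obstacle arises; the lemma is essentially a calculus exercise whose role is purely to pin down the optimal constant for the probabilistic argument that follows.
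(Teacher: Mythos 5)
Your proof is correct and follows the same elementary calculus route as the paper, which simply notes $f'(c)=1-sac^{s-1}$ and solves $f'(c)=0$. Your version is actually more complete, since you also verify concavity via $f''$ and carry out the substitution $c_\star^{s-1}=1/(as)$ to confirm the stated maximum value, steps the paper leaves implicit.
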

\begin{proof}
  This follows from elementary differential calculus, with $f'(c)=1-sac^{s-1}$ and $f'(c)=0 \iff  c=(as)^{-\frac{1}{s-1}}$. 
\end{proof}

\begin{theorem} \label{thm:alterationcodes}
  Let $C$ be a reducible curve composed of $m$ generic lines in $\PP^{k-1}$, where $\frac{k+1}{2} \leq m \leq k$. Set $s=2m-k$ and $c = \left(as\right)^\frac{1}{1-s}$. Let $\Gamma$ be a subset of $C(\FF_q)$ obtained by selecting each point randomly and independently with probability $p = cQ^{-\frac{s-2}{s-1}}$. If
  \[ \left( c - 2\binom{m}{s}c^s \right) Q^{\frac{1}{s-1}} - \sqrt{2c\log(3m)}Q^\frac{1}{2(s-1)} - 2 \cdot 3^m Q^\frac{-1}{s-1} \geq \frac{n}{m} \] 
  then with probability at least $\frac{1}{6}$ there exists $\Gamma' \subset \Gamma$ such that $\Gamma'$ is a PDMS code of length at least $n$, dimension $k$, $m$ blocks with block locality $(2, \dots, 2)$ and global parameter $s$. 
\end{theorem}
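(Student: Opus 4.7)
The plan is to execute the probabilistic method with alterations by combining Proposition~\ref{prop:fullbound} with the PMDS criterion from Corollary~\ref{cor:PMDScriterion}. Form the random set $\Gamma$ as described, with the probability $p = cQ^{-(s-2)/(s-1)}$ and with the constant $c$ chosen according to Lemma~\ref{lem:bestc} (taking $a = 2\binom{m}{s}$) so that $c-2\binom{m}{s}c^s$ is maximized. The point of this choice is that the hypothesis of the theorem then reads precisely as the guarantee produced by Proposition~\ref{prop:fullbound} being at least $n/m$.

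First I would condition on the ``good event'' of Proposition~\ref{prop:fullbound}, which occurs with probability at least $1/6$. On this event, for every line index $i$ we have
\[
V_i - X \ >\ \left(c - 2\binom{m}{s}c^s\right)Q^{\frac{1}{s-1}} - \sqrt{2c\log(3m)}\,Q^{\frac{1}{2(s-1)}} - 2\cdot 3^m Q^{-\frac{1}{s-1}} \ \geq\ \frac{n}{m},
\]
by the hypothesis. In particular each $V_i \geq n/m + X$, and summing over $i$ yields $\sum_i V_i \geq n + mX$.

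Next I would perform the alteration. By Corollary~\ref{cor:PMDScriterion}, the obstruction to admissibility of $\Gamma$ is the presence of $(2u-k)$-subsets of crossing circuits of some size $u\in\{\lceil(k+1)/2\rceil,\dots,m\}$ that are entirely contained in $\Gamma$; the random variable $X$ counts exactly these bad subsets. For each such bad subset I would remove one of its points from $\Gamma$, yielding a subset $\Gamma' \subseteq \Gamma$ with $|\Gamma| - |\Gamma'| \leq X$ and with the property that no bad subset of any crossing circuit remains. Consequently $\Gamma'$ satisfies the hypothesis of Corollary~\ref{cor:PMDScriterion} provided every line still contains at least two points of $\Gamma'$.

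The only remaining check, which is also the subtlest point, is verifying that the alteration does not empty out any line. Since we cannot control a priori which line each removed point sits on, I would use the worst-case bound: all $X$ removals occur on the same line $L_i$, so line $L_i$ retains at least $V_i - X \geq n/m \geq 2$ points (assuming $n \geq 2m$, which is implicit in the regime of interest). Each other line retains all $V_j$ of its selected points. Thus the total length is at least $\sum_i V_i - X \geq n + mX - X \geq n$, each local block has at least $n/m \geq 2$ points, and the PMDS property follows from Theorem~\ref{pmds_theorem} via Corollary~\ref{cor:PMDScriterion}. The main ingredient, the concentration bound on $V_i - X$, has already been established; assembling it with the alteration argument is the remaining task.
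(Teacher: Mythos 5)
Your proposal is correct and follows essentially the same route as the paper: condition on the event of Proposition~\ref{prop:fullbound}, remove one point per bad subset counted by $X$, bound each line's loss by the worst case $X$, and invoke Corollary~\ref{cor:PMDScriterion}. The paper's own proof is just a two-line summary of exactly this alteration argument, so your more detailed write-up (including the check that no line drops below $n/m \geq 2$ points) matches and slightly elaborates the intended reasoning.
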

\begin{proof}
This follows immediately from Proposition \ref{prop:fullbound}: we can remove $|X|$ points from \emph{every} line and obtain $\Gamma' \subset \Gamma$ that still contains at least $\frac{n}{m}$ points on each line and $\Gamma'$ is a PMDS code with the desired parameters for the same reason as in Theorem \ref{thm:randomcodes}.
\end{proof}

\begin{corollary} \label{cor:alterationcodes} For fixed $m \leq k$ and growing $n$, and taking $s=2m-k$, there exist PDMS codes of length at least $n$, dimension $k$, $m$ blocks with block locality $(2, \dots,  2)$ and global parameter $s$ over $\FF_q$ for $q = O(n^{s-1})$.   
\end{corollary}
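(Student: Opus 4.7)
The plan is to apply Theorem~\ref{thm:alterationcodes} directly and check asymptotically that its hypothesis can be met when $q$ grows like $n^{s-1}$. With the prescribed value $c = (2s\binom{m}{s})^{1/(1-s)}$ (which is the optimizer from Lemma~\ref{lem:bestc} applied with $a = 2\binom{m}{s}$), the left-hand side of the inequality in Theorem~\ref{thm:alterationcodes} has the shape
\[
\left(c - 2\binom{m}{s}c^s\right) Q^{1/(s-1)} \;-\; \sqrt{2c\log(3m)}\, Q^{1/(2(s-1))} \;-\; 2\cdot 3^{m}\, Q^{-1/(s-1)},
\]
where $Q = q+1$. All three coefficients depend only on $m$ and $s$, and the first term is of strictly higher order in $Q$ than the other two.

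The first concrete step I would carry out is to verify that the leading coefficient $c - 2\binom{m}{s}c^s$ is strictly positive. By Lemma~\ref{lem:bestc}, this quantity equals the maximum value $(2s\binom{m}{s})^{1/(1-s)}(1 - 1/s)$, which is positive whenever $s \geq 2$. With positivity in hand, the second and third terms are of strictly lower asymptotic order in $Q$, so there exists a constant $A = A(m,s) > 0$ such that the hypothesis of Theorem~\ref{thm:alterationcodes} is satisfied whenever $Q^{1/(s-1)} \geq A n/m$. Equivalently, $q = O(n^{s-1})$ suffices, and the theorem then guarantees the existence of a PMDS code with the stated parameters and length at least $n$ with probability at least $1/6$, hence a fortiori unconditional existence.

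The only sticky point I anticipate is the boundary case $s = 1$, where the exponent $1/(s-1)$ is undefined and the bound $q = O(n^{s-1})$ would read as $q = O(1)$; this is outside the regime of the corollary and is in any case handled by the explicit construction of Section~\ref{sec:explicit}, so it may be safely excluded here. Beyond this, the only verification required is a straightforward asymptotic comparison, so I expect no real difficulty: the deep work has already been absorbed into Theorem~\ref{thm:alterationcodes} and Lemma~\ref{lem:bestc}.
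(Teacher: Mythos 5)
Your proposal is correct and follows the same route as the paper: both arguments simply invert the hypothesis of Theorem~\ref{thm:alterationcodes} asymptotically, noting that the dominant term is a positive constant times $Q^{1/(s-1)}$ so that $n = \Omega(q^{1/(s-1)})$, i.e.\ $q = O(n^{s-1})$, suffices. Your explicit verification via Lemma~\ref{lem:bestc} that the leading coefficient $c - 2\binom{m}{s}c^s = (2s\binom{m}{s})^{1/(1-s)}(1-s^{-1})$ is strictly positive for $s \geq 2$ is a detail the paper leaves implicit, and your exclusion of $s=1$ is appropriate.
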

\begin{proof}
Just as in Corollary \ref{cor:randomcodes}, this follows from asymptotically inverting the bound in Theorem \ref{thm:alterationcodes}, but this time $n$ only needs to be greater than a function growing like a multiple of $q^{\frac{1}{s-1}}$ rather than $q^{\frac{1}{s}}$.  
\end{proof}

We conclude the section with a discussion about preexisting results on the field size required for construction PMDS codes.

\begin{remark}
\label{rmk: comparison}
 Let us compare the field size obtained in Corollary \ref{cor:alterationcodes} for the existence of algebraic-geometric PMDS codes with  known existence results in literature. For simplicity, we do it restricting to the case of homogeneous PMDS codes, that is when $n_1=\ldots=n_m=\frac{n}{m}$ and when $k_1=\ldots=k_m=2$. Observe that our result requires that $s \leq m$. We concentrate on the asymptotic case for $n$, while we consider $m$ and $s$ fixed.
 
 One of the first general results on the field size required for PMDS codes was given by Chen \textit{et al.} in \cite{ch07}, where the author proved that for $q=O(\binom{n-1}{2m-s-1})$ we can always have PMDS codes. For fixed $s$ and $m$ with $s\leq m$,  $\binom{n-1}{2m-s-1}$ is a polynomial in $n$ of degree $2m-s-1\geq s-1$, hence Corollary \ref{cor:alterationcodes} is an improvement on this result for the asymptotic regime.
 
 Now, we consider the first general construction of PMDS codes for every value $s$ due to Calis and Koyluoglu in \cite{ca17}, based on Gabidulin codes. The same construction was also provided in \cite{rawat2013optimal}, although the authors did not investigate the PMDS structure. The field size required is $q=O((\frac{n}{m})^{2m})$. Since $s \leq m$, this field size  in the asymptotic regime is worse than the one of Corollary \ref{cor:alterationcodes}.
 
 Another general construction was presented by Gabrys \textit{et al.} in \cite{gabrys2018constructions}. For that construction, the size needed for the underlying field is $q=O(\max \{ m, (\frac{n}{m})^{\frac{n}{m}}\}^s)$,  which is exponential in $n$ for fixed $m$, and hence, the field size given in Corollary  \ref{cor:alterationcodes} is much better. 
 
 A last construction that we need to mention is the one provided by Mart{\'\i}nez-Pe{\~n}as and Kschischang in \cite{martinez2019universal}. It is based on linearized Reed-Solomon codes in the sum-rank metric, which are a generalization of Gabidulin codes. This allows to sensibly reduce the field size, which is shown to be $q=O(\max\{m+1,\frac{n}{m}\}^2)$. In the asymptotic regime, Corollary \ref{cor:alterationcodes} is better only when $s=2$, for which we also give an explicit construction in Theorem \ref{thm:PMDSconstruction_s=2}, and it is comparable when $s=3$. However, their construction does not work over prime fields, while our approach does not present this obstruction.
\end{remark}

 \bibliographystyle{abbrv}
\bibliography{PMDS.bib}
\end{document}